\newtheorem{proposition}{Proposition}
\newtheorem{lemma}{Lemma}
\newtheorem{theorem}{Theorem}
\newtheorem{corollary}{Corollary}
\newtheorem{assumption}{Assumption}
\theoremstyle{definition}
\newtheorem{definition}{Definition}
\newtheorem{example}{Example}
\newcommand{\cO}{\mathcal{O}}
\newcommand{\cV}{\mathcal{V}}
\newcommand{\cP}{\mathcal{P}}
\newcommand{\bR}{\mathbb{R}}
\newcommand{\eR}{\overline{\mathbb{R}}_+}
\newcommand{\eps}{\varepsilon}
\newcommand{\pri}{\mathrm{int}}
\newcommand{\pub}{\mathrm{ext}}
\newcommand{\altIC}{AltPIC}
\newcommand{\altNB}{AltNB}
\title{Nonbossy Mechanisms: Mechanism Design Robust to Secondary Goals}
\author[a]{Renato Paes Leme}
\author[a]{Jon Schneider}
\author[b]{Hanrui Zhang}
\affil[a]{Google Research, 111 8th Avenue, New York, NY 10011, USA}
\affil[b]{Chinese University of Hong Kong, Shatin, NT, Hong Kong SAR, China}
\affil[ ]{\texttt{\{renatoppl,jschnei\}@google.com}, \texttt{hanrui@cse.cuhk.edu.hk}}
\date{}
\begin{document}

\maketitle

\begin{abstract}
    We study mechanism design when agents may have hidden secondary goals which will play a role when the primary utility of the outcomes is the same.
    We show that in such cases, a mechanism is immune to strategic manipulation if and only if it is incentive compatible with regard to primary utility --- a property we term ``primary incentive compatibility'' --- {\em and} nonbossy --- a well-studied property in the context of matching and allocation mechanisms.
    We give complete characterizations of primarily incentive-compatible and nonbossy mechanisms in various settings, including auctions with single-parameter agents and public decision settings where all agents share a common outcome.
    In particular, we show that in the single-item setting, a mechanism is primarily incentive compatible, individually rational, and nonbossy if and only if it is a sequential posted-price mechanism.
\end{abstract}

\noindent {\bf Keywords:} nonbossy mechanisms, simple mechanisms

\noindent {\bf JEL classification:} D44 Auctions, D82 Asymmetric and Private Information; Mechanism Design

\noindent {\bf Declaration of interest:} none

\section{Introduction}

Incentive-compatible direct mechanisms guarantee that it is in the best interest of each {\em self-interested} agent to report their true type to the mechanism.
However, such mechanisms may still be prone to manipulation by agents who have secondary goals not expressed in their utility function. Consider for example a broker that represents two different buyers $A$ and $B$ in an auction. Given their fiduciary duties, the broker is obliged to get the best possible outcome for each client. When choosing between two bids on behalf of client $A$ that lead to the same outcome for client $A$, the broker may break ties in favor of the bid that benefits client $B$. Such ``hidden secondary goals'' complicate the standard picture of mechanism design.

Interestingly, many common mechanisms that are incentive compatible in the classical sense are no longer immune to strategic behavior in the presence of these secondary goals. For example, consider two ``weakly cooperating'' (in the sense explained below) agents participating in a second price auction, where a single item is being sold.
These agents behave in the following way: their primary goal is to maximize their own utility, which is their private value minus the payment if they win the item, and $0$ otherwise; in the case where they cannot (rationally) win the item, they instead try to achieve their secondary goal, which is to maximize the other agent's utility.
Then, assuming both agents know each other's value, one natural bidding strategy for both agents is the following: the agent with the higher value should bid their true value, and the agent with the lower value should bid $0$.
As a result, the agent with the higher value still wins the item, but their payment is always $0$.
In fact, the ``intended behavior'', i.e., both agents bidding their true values, is not even an equilibrium in this case, because the agent with the lower value would have strict incentive to deviate and bid $0$.

On the other hand, there do exist some incentive-compatible mechanisms (again, in the classical sense) which remain robust\footnote{
    Throughout the paper, we use ``robustness'' to refer to properties that in essence capture some kind of incentive compatibility.  We leave the term ``(primary) incentive compatibility'' for more specific notions of robustness to avoid ambiguity.
}
in the presence of secondary goals. One reason why sequential posted-price mechanisms --- where an auctioneer iterates through the agents in order, offering each agent a take-it-or-leave-it price --- are appealing in practice is this type of robustness.\footnote{
    Although we describe sequential posted-price mechanisms in the common, indirect form, the actual mechanisms that we refer to throughout the paper are the {\em direct implementation} of sequential posted-price mechanisms.
    Such a direct implementation would solicit valuations from all agents, compare these reported valuations with the prices in the same order that the agents are visited, and serve the first agent whose reported valuation is higher than the price.
}
In particular, in a posted-price mechanism, an agent cannot change the utility of other agents without changing their own outcome (winning or losing the item), and thus cannot optimize their secondary objective without losing primary utility.\footnote{
    Here we assume agents always have strict preferences between different alternatives even if they happen to induce the same quasi-linear utility.
}
Note that posted-price mechanisms are further agnostic to the specific types of secondary goals of the agents: they are robust regardless of whether the agents' secondary goals are altruistic or malicious (or completely arbitrary), and do not require the agents to report their secondary goals upfront.

The above examples illustrate the setting that we call ``hidden secondary goals'' where secondary goals are present and unreported to the mechanism designer, but are less important than an agent's primary goal of maximizing their own utility. It is conceivable that ``rational agents'' (in a loose sense) sometimes do behave this way: such agents act carefully to optimize for their main interest, {\em except when it does not really matter} --- if multiple actions are equally good as far as their main interest is concerned, then they turn to further optimizing for their interest in other agents, be it sympathy, envy, or any other consideration deemed rationally irrelevant. This is especially common in cases where agents participate in the auction via brokers or bidding agents, such as real-estate brokers, literary agents, art dealers, ... Brokers are contractually or legally obligated to act in their clients' best interest, but may follow hidden secondary goals when this does not affect those interests.
The secondary goals are {\em hidden}, in the sense that the mechanism designer, and perhaps even the agents, are unsure about their existence beforehand --- in fact, it may very well be the case that some agents have secondary goals while others do not.
As a result, it would make little sense to elicit such secondary goals explicitly in the mechanism: on one hand, agents who happen to be free of secondary goals would simply be confused; on the other, the potential benefits of doing so would hardly justify the effort, given how rich these secondary goals generally can be.
In fact, in domains such as matching, it is a common practice not to elicit full preferences and rely solely on, for example, ordinal information.

In this paper, we study the problem of mechanism design in the setting of hidden secondary goals --- such preference models have already been considered in other contexts including decision theory \citep{fishburn1975axioms,blume1991lexicographic}, resource allocation \citep{luss1999equitable,saban2014note}, learning \citep{booth2010learning,diana2021lexicographically}, and inference \citep{kohli2007representation}. Our main goal is to answer the following questions:
\begin{quote}
    \em What mechanisms are incentive compatible when agents exhibit hidden secondary goals?
\end{quote}

\subsection{Our Results}

\paragraph{Alternate interpretation: nonbossy mechanisms.}
We begin by giving another interpretation of incentive compatibility\footnote{We focus on {\em dominant-strategy} incentive compatibility throughout the paper, both in the classical sense and in the presence of secondary goals.} in the presence of secondary goals: incentive compatibility can be decomposed into the following two properties:
\begin{itemize}
    \item {\em ``Primary'' incentive compatibility} (henceforth PIC for short): that no agent can benefit in terms of their {\em primary} goal by misreporting their utility function.
    This aligns with the classical notion of incentive compatibility in the absence of secondary goals, with regard to primary utility only.
    \item {\em Nonbossiness}: that no agent can change the outcomes that other agents receive, or the amounts that other agents pay, without changing their own outcome or payment (in accordance with the existing notion of nonbossiness in the economics literature; see Section \ref{sec:related} for an overview).
\end{itemize}
Moreover, PIC and nonbossiness together are also sufficient for incentive compatibility in the presence of secondary goals.
This interpretation of incentive-compatible mechanisms with secondary goals as nonbossy mechanisms is more amenable to analysis and enables all subsequent results of the paper.

\paragraph{Characterizing nonbossy mechanisms.}
We then turn to the structure of nonbossy mechanisms in various concrete classes of environments that are commonly considered in the context of mechanism design.
We first prove a general characterization of the payment rules of nonbossy mechanisms. Specifically, we show that as long as the environment is ``expressive'' enough (containing a wide enough range of valuation functions), the payment rule of any PIC and nonbossy mechanism is a function of the outcomes {\em only} --- if two valuation profiles lead to the same outcomes for all agents, then the amounts that all agents pay must also be the same. Almost all commonly studied environments are expressive enough for this characterization to hold\footnote{We also show this property is necessary for this characterization, in the sense that there exist environments without this property in which the payment characterization fails.}. 

Based on the payment characterization, we further investigate two prominent settings in mechanism design, and provide complete characterizations of nonbossy mechanisms.
First, we consider the abstract setting where the mechanism selects a single outcome (corresponding to a public decision, such as the location of a hospital) that is shared by all agents, and each agent may value each potential outcome arbitrarily.
For this setting, we show that quite surprisingly, any mechanism that is PIC and nonbossy must either be dictatorial or always choose between only $2$ outcomes.
In other words, when choosing a common outcome, only trivial mechanisms can be immune to strategic manipulation if agents care (even infinitesimally) about each other.
This is reminiscent of the celebrated Gibbard-Satterthwaite theorem~\citep{gibbard1973manipulation,satterthwaite1975strategy}, and indeed the proof works by reducing to that theorem.

We then investigate the single-parameter\footnote{
    Throughout the paper, the term ``single-parameter'' refers to settings where agents each have a single-dimensional type, possibly complicated by an additional tiebreaking ordering over possible outcomes.
} setting where identical items are allocated, and each agent is interested in at most one item.
For this setting, we show that nonbossy (and PIC and individually rational) mechanisms generalize sequential posted-price mechanisms, which approach all agents one by one and make take-it-or-leave-it offers.
In particular, if there is a single item being sold, then the two classes of mechanisms are exactly the same.
This provides an interesting semantic interpretation of sequential posted-price mechanisms in the single-item setting: the only mechanisms in the single-item setting that are IC in the presence of secondary goals are sequential posted-price mechanisms.
To the best of our knowledge, this is the first characterization of this kind for sequential posted-price mechanisms.

When there are multiple items (or even richer feasibility constraints), nonbossy, PIC and individually rational mechanisms are strictly richer than sequential posted-price mechanisms: in particular, they allow for posting a {\em vector} of prices to all agents simultaneously, which is accepted if and only if each agent accepts their personal component of the price vector.
More generally, nonbossy, PIC and individually rational mechanisms correspond to {\em decision lists with exceptions}: a mechanism can be implemented using a sequence of price vectors, each associated with a distinct vector of outcomes and an exception list, consisting of a subset of other price vectors.
The mechanism checks these price vectors one by one, and chooses the outcomes associated with the first vector satisfying: (1) all agents accept the price vector, and (2) there does not exist another price vector that is also accepted in the exception list of this price vector.
In fact, we show that if the exception lists are chosen appropriately, then the order in which these price vectors are considered does not matter.
The proof explicitly constructs a decision list given a mechanism, by first showing that any nonbossy and PIC mechanism must ``break ties consistently'', i.e., if the mechanism chooses one outcome vector when another outcome vector is also satisfied, then it can never choose the latter vector whenever the former is satisfied.
Given this, one can construct a decision tree from a mechanism, which can then be turned into a decision list by merging nodes corresponding to the same outcome vector.

\subsection{Related Work}\label{sec:related}

\paragraph{Nonbossy allocation rules and social choice functions.}

Our results can be viewed as a monetary version of previous results on nonbossy allocation rules and social choice functions.
\citet{satterthwaite1981strategy} propose the notion of nonbossiness for allocation mechanisms without money, and give several characterizations thereof.
Since then, nonbossy allocation rules have been studied extensively: see the survey by \citet{thomson2016non} for a comprehensive exposition.
The vast majority of these results focus on environments without money (e.g., allocation of indivisible items under unit demand~\citep{pycia2017incentive,bade2020random} or multi-unit demand~\citep{svensson1999strategy,papai2001strategyproof,hatfield2009strategy}, stable two-sided matching~\citep{kojima2010impossibility},\footnote{\citet{bade2020random} provides an alternative proof for the characterization of individually strategyproof and nonbossy mechanisms by \citet{pycia2017incentive}.} etc.), which make them incomparable with our results.
Below we discuss several results that are particularly related to ours.

\citet{svensson2002strategy} study nonbossy allocation mechanisms where, in addition to indivisible items, there is a fixed amount of money being allocated.
They show that the space of strategyproof and nonbossy mechanisns is finite, and that this space can be further restricted with additional properties enforced.
This setting is different from ours, because the total ``payment'' (i.e., money being allocated) has to sum up to the fixed amount --- indeed, the finiteness of the space of mechanisms is too strong to be true with general payments that we consider.

\citet{mishra2014non} study single-item auctions with a property which \citet{thomson2016non} terms ``non-monetary nonbossiness'': the property states that no agent should be able to change other agents' outcomes without changing their own outcome, but puts no restrictions on payments.
\citet{mishra2014non} show an allocation rule is implementable in dominant strategies and non-monetary nonbossy if and only if it is strongly rationalizable.
\citet{nath2015affine} study ``allocation nonbossy'' social choice functions, which is the same as non-monetary nonbossiness.
They show that given some richness assumptions, any social choice function that is onto, strategyproof and non-monetary nonbossy must be an affine maximizer.
\citet{mukherjee2015axioms} studies ``nonbossiness in decision'', which is again equivalent to non-monetary nonbossiness.
The author shows that in the single-item setting, any anonymous, strategyproof, and non-monetary nonbossy allocation rule must be that of a Vickrey auction with a reserve price.
As noted by \citet{thomson2016non}, non-monetary nonbossiness is incomparable to the notion of nobossiness that we study.
Moreover, it is not clear what semantic interpretations (e.g., robustness against a certain form of strategic behavior) non-monetary nonbossiness has beyond its very definition.

In the context of the Shapley-Scarf house allocation model~\citep{shapley1974cores}, \citet{miyagawa2001house} shows that a mechanism is strategyproof, individually rational, and nonbossy iff it is a fixed-price mechanism, and \citet{svensson2002fixed} gives an alternative, shorter proof.  Established in a different setting, this result is not directly comparable to our characterizations.  Nonetheless, it is connected to our characterization in the single-item setting in the following sense: in the single-item setting, a fixed-price mechanism sets upfront a fixed price for each agent, elicits agents' valuations, and then allocates the item through the core assignment, in which the item goes to the agent with the highest price among those who are willing to pay their respective prices.  These mechanisms are subsumed by sequential posted-price mechanisms --- in fact, they are a restricted class of sequential posted-price mechanisms, where the principal must visit agents with higher prices first.  Compared to fixed-price mechanisms, sequential posted-price mechanisms exhibit one additional degree of freedom, where the mechanism specifies not only the prices offered to the agents, but also the orders in which these offers are made.

Also worth noting is the recent result by \citet{pycia2025ordinal}.  In particular, their Theorem~5 states that, in the unit-demand house allocation setting, a mechanism is strategyproof and Arrovian efficient with respect to a complete social choice function, iff it is an ``almost sequential dictatorship'', which roughly works by iteratively picking one unmatched agent in each round, and let that agent pick their most preferred house among the unmatched ones, with the possible exception that, when only two houses are left unmatched and the two remaining agents happen to prefer the same house, the mechanism gets to break the tie depending on which house is commonly preferred, instead of choosing one of the two agents to be the next dictator.  This is in the same vein as our characterization for the common outcome setting, which involves two similar conditions: dictatorship and choosing between two outcomes.  In fact, nonbossiness is implied in this particular result by \citet{pycia2025ordinal}.

Finally, perhaps most closely related to our results is the recent work by \citet{pycia2021non}, who study the same notion of nonbossiness as ours in single-item auctions with general payments.
They show, in a Bayesian setting, that (1) the first-price auction with no reserve is the essentially unique mechanism that is nonbossy, individually rational, and efficient in equilibrium, and (2) the first-price auction with the optimal reserve price is the essentially unique mechanism that is nonbossy, individually rational, and revenue maximizing.
These results are not directly comparable to ours: aiming to capture dominant-strategy robustness against strategic behavior with secondary goals, we focus on nonbossy mechanisms that are also (dominant-strategy) PIC, which first-price auctions are not.\footnote{
    Note that with multiple agents, the revelation principle generally only gives {\em Bayesian} incentive compatibility, which is generally weaker than dominant-strategy (primary) incentive compatibility that we consider.
}
In fact, both the results by \citet{pycia2021non} and our results imply that no PIC and nonbossy mechanism can be efficient (i.e., welfare-maximizing) or revenue-maximizing in the single-item setting.

\paragraph{Stronger notions of robustness.} 
Conceptually, our results are along the line of research on stronger notions of robustness against strategic behavior in mechanism design.
Perhaps the closest notion of robustness (other than nonbossiness) to ours is \textit{obvious strategyproofness} by \citet{li2017obviously} (as well as variants thereof, such as strong obvious strategyproofness \citep{pycia2018obvious}), which, roughly speaking, requires that the worst thing that may happen to an agent (over other agents' actions) under truthful reporting must be at least as good as the best thing that may happen (again, over other agents' actions) under any possible deviation.
This is similar to our notion of nonbossiness at a high level, in the sense that both notions put limitations on how an agent may influence the utility of other agents.
However, as we show in Appendix~\ref{app:osp}, these two notions of robustness are not directly comparable, and in particular, neither notion subsumes the other.
Another notion of robustness is credibility by \citet{akbarpour2020credible}, which roughly says that the principal has no incentive to cheat in the mechanism.
In contrast, we assume full commitment power for the principal, and our notion of robustness concerns agents' incentives only.
Another related notion is \textit{group strategyproofness}, which tries to characterize when groups of agents can collude to increase each of their utilities. Group strategyproofness has been shown to be equivalent to nonbossiness plus incentive compatibility in certain settings~\citep{papai2000strategyproof,alva2017manipulation}, but one key difference between this line of work and our results is that it focuses on choice functions without money, while payments are a crucial component in our model. \citet{goldberg2005collusion} study a stronger variant of group strategyproofness (that they call ``$t$-truthfulness'') in single-parameter settings; however their notion is considerably stronger than nonbossiness, and the only mechanisms satisfying even $2$-truthfulness are ordinary sequential posted price mechanisms (in contrast to the expanded class of decision list nonbossy mechanisms we uncover).

\paragraph{Mechanisms with (non-vanishing) externalities.} The concept of ``externalities'' is well established in the economics literature, and there is a significant line of existing work on mechanism design with externalities \citep{jehiel1996not, jehiel1999multidimensional, segal1999contracting, bernstein2012contracting, bartling2016externality}, in addition to closely related lines of work on mechanism design with altruistic \cite{levine1998modeling, fehr2006economics} or spiteful agents \citep{brandt2001antisocial, morgan2003spite, brandt2005spiteful, zhou2007vindictive}. As far as we are aware, we are the first to specifically study the setting of externalities that tend to zero in size (or mechanisms robust to small externalities), and the techniques used for larger externalities are significantly different than the techniques we employ. 

\paragraph{Simple mechanisms and posted-price mechanisms.}
Another related topic is simple mechanisms, and in particular, posted-price mechanisms.
The study of simple mechanisms is in part driven by the fact that in rich environments (e.g., with multiple items and possibly non-additive valuations), optimal mechanisms in terms of revenue or welfare can be complex~\citep{thanassoulis2004haggling,manelli2007multidimensional,hart2015maximal,daskalakis2017strong} and/or hard to compute~\citep{mirrokni2008tight,daskalakis2014complexity}.
On the other hand, often there are relatively simple mechanisms, such as posted-price mechanisms, that can achieve good (often constant) approximations to the respective benchmarks.
Early research on the topic includes the work by \citet{hagerty1987robust}, who characterize posted-price mechanisms as essentially the only mechanisms such that each agent has a dominant strategy in the context of bilateral trade.
Along this line, \citet{makowski1994bayesian} characterize two classes of mechanisms defined by certain properties in a fairly general model, and show that both characterizations have simple interpretations;
\citet{barbera1995strategy} study strategyproof social choice functions in a setting where multiple divisible items are allocated to multiple agents, and show such social choice functions take the simple form of ``trading according to pre-specified proportions'';
\citet{satterthwaite2002optimality} show that a simple mechanism, the $k$-double auction, is asymptotically optimal in the worst case in markets with multiple buyers and sellers trading indivisible homogeneous items.
In terms of revenue, it is known that when agents are single-parameter or unit-demand, sequential posted-price auctions achieve a constant fraction of the optimal revenue~\citep{chawla2007algorithmic,chawla2010multi,kleinberg2012matroid,cai2019duality}.
Similar results have been established for additive agents~\citep{hart2017approximate,li2013revenue,babaioff2014simple,yao2015n,cai2019duality} and even subadditive agents~\citep{rubinstein2018simple,cai2017simple} using slightly less simple mechanisms.
As for welfare, it is known that anonymous item-pricing mechanisms achieve an approximation factor of $2$ for submodular/XOS agents~\citep{feldman2014combinatorial,dutting2020prophet}, and $O(\log \log m)$ for subadditive agents~\citep{dutting2020prophet,dutting2020log}.
Our results are most closely related to prior work on posted-price mechanisms: restricted to single-parameter settings, our results show that the class of mechanisms that are incentive compatible, individually rational and nonbossy generalize sequential posted-price mechanisms, in that incentive-compatible, individually rational and nonbossy mechanisms (corresponding to decision lists with exceptions) are slightly less simple, but strictly more powerful.
Clock auctions (see, e.g., \citep{christodoulou2022optimal,balkanski2022deterministic,feldman2022bayesian}) have also received significant attention, especially in the context of spectrum auctions.
Our results for single-parameter settings are related to clock auctions in the sense that both classes of mechanisms generalize sequential posted-price mechanisms (albeit in different directions) and preserve certain desirable properties.

\section{Preliminaries}

\subsection{Valuation Spaces, Outcome Spaces, and Mechanisms.}

We consider multiagent settings with $n$ agents.
Each agent $i$ is associated with a valuation space $\cV_i$ and a personal outcome space $\cO_i$.
Each valuation $v_i \in \cV_i$ consists of two components: (1) the primary valuation function (abusing notation, we also denote this component by $v_i$) maps every personal outcome $o_i \in \cO_i$ to a real number $v_i(o_i)$, which is the value $i$ derives from outcome $o_i$, and (2) the tiebreaking component $\vartriangleright_i$ specifies a total ordering over $\cO_i$, such that whenever two outcome-payment pairs induce the same quasi-linear utility (to be formally discussed later when primary incentive compatibility is defined), the agent always prefers the one with a ``larger'' outcome.\footnote{
    Most of our results hold ``almost everywhere'' when ties are broken by the mechanism.
    See Appendix~\ref{app:tiebreaking} for more details.
}
Abusing notation, we sometimes write $v_i = (v_i, \vartriangleright_i)$.
We also define the primary component of $\cV_i$ to be the collection of the primary components of all valuations $v_i \in \cV_i$, and the tiebreaking component similarly.
Throughout the paper, we focus on valuation spaces $\cV_i$ whose tiebreaking component is the set of all total orderings over $\cO_i$.
The joint valuation space $\cV$ specifies all possible combinations of valuations, and the joint outcome space $\cO \subseteq \prod_i \cO_i$ specifies all feasible combinations of personal outcomes.
For brevity and readability, we will omit the existence of the tiebreaking component whenever appropriate, and explicitly refer to it only when necessary.

\begin{example}
    Consider a single-parameter auction setting, where there are $k < n$ identical items for sale, and each agent wants at most one of them.
    Each $\cO_i$ can be either $\{0, 1\}$ (corresponding to deterministic allocations) or $[0, 1]$ (corresponding to randomized allocations).
    Each valuation space $\cV_i$ is isomorphic to the product space of (1) the set of nonnegative real numbers $\bR_+$ and (2) the set of all total orderings over $\cO_i$.
    The former corresponds to the primary component and the latter the tiebreaking component.
    This is because (assuming agents care about expected utility) each valuation $v_i \in \cV_i$ can be described by a nonnegative real number --- the value $v_i(1)$ of agent $i$ receiving an item --- and how ties are broken (again, this will be formally discussed soon).
    Suppose we consider deterministic mechanisms.
    Then the joint outcome space is defined to be
    \[
        \cO = \left\{o \in \{0, 1\}^n \,\middle|\, \sum_{i \in [n]} o_i \le k\right\}.
    \]
    This captures the fact that no more than $k$ agents can receive an item.
\end{example}

We focus on direct mechanisms in this paper.
A direct mechanism (or simply a mechanism for short) $M = (f, p)$ maps each combination of valuations $v = (v_1, \dots, v_n) \in \cV$ to a joint outcome $f(v) \in \cO$, and charges payments $p(v) \in \bR_+^n$.
Let $f_i(v)$ denote the personal outcome for agent $i$ in $f(v)$, and $f_{-i}(v)$ all other personal outcomes.
Similarly define $p_i$, $p_{-i}$.
We also generally use $o \in \cO$ and $q \in \bR_+^n$ to denote specific outcome/payment vectors, as opposed to the mappings $f: \cV \to \cO$ and $p: \cV \to \bR_+^n$.
For each vector of valuations $v \in \cV$ and each agent $i$, we use $v_{-i}$ to denote the valuations of all agents except $i$.
We similarly define $o_{-i}$, $q_{-i}$, $\cV_{-i}$, etc.

Note that under our definition, a direct mechanism elicits not only the primary valuation function, but also the tiebreaking ordering.
This is natural, and in some sense necessary in the context of our discussion.
To see why this is the case, consider a single-agent posted-price mechanism, where the price offered by the mechanism is $p$.
Further suppose the agent's primary valuation is $v = p$, which is reported to the mechanism, but the tiebreaking component remains private.
Then, it is impossible for the mechanism to decide whether or not to allocate the item to the agent while respecting the agent's tiebreaking ordering in all circumstances (which is necessary for primary incentive compatibility to be defined later), simply because the latter is unknown to the mechanism.
This is not an issue in the indirect version of the same mechanism, where the agent decides whether to buy the item based on both the primary valuation and the tiebreaking ordering, both known privately to the agent.
In other words, we must allow direct mechanisms to elicit both components of each agent's valuation, so our definition allows posted-price mechanisms, as well as other natural mechanisms, to admit direct implementations.
This does not necessarily mean we must directly elicit all this information {\em in reality}; rather, the indirect versions of the same mechanisms are likely better suited for practical purposes.

\subsection{Primary Incentive Compatibility and Individual Rationality}

The classical notions of (dominant-strategy) individual rationality and primary incentive compability are captured by the following definitions in our model.

\begin{definition}[Individual Rationality (IR)]
A mechanism $M = (f, p)$ is individually rational (IR) if for all $i$ and $v \in \cV$, 
\[
    v_i(f_i(v)) - p_i(v) \ge 0.
\]
\end{definition}

\begin{definition}[Primary Incentive Compatibility (PIC)]
A mechanism $M = (f, p)$ is primarily incentive compatible (PIC) if for all $i \in [n]$, $v = (v_i, v_{-i}) \in \cV$, and $v_i' \in \cV_i$,
\begin{align*}
    & v_i(f_i(v_i, v_{-i})) - p_i(v_i, v_{-i}) > v_i(f_i(v_i', v_{-i})) - p_i(v_i', v_{-i}), \text{ or} \\
    & v_i(f_i(v_i, v_{-i})) - p_i(v_i, v_{-i}) = v_i(f_i(v_i', v_{-i})) - p_i(v_i', v_{-i}) \text{ and } f_i(v_i, v_{-i}) \trianglerighteq_i f_i(v_i', v_{-i}).
\end{align*}
Here, $\vartriangleright_i$ is the tiebreaking component of $v_i$.
\end{definition}

In the next subsection, we will introduce a richer model of preferences incorporating secondary goals, which induces a stronger notion of incentive compatibility than PIC defined above.

\subsection{Incentives in the Presence of Hidden Secondary Goals}\label{sec:prelim_ext}

We now formally define agents' incentives in the presence of hidden secondary goals.
For each agent $i$, let $\succ_i$ be $i$'s preference over all possible combinations of outcomes and payments, i.e., over $\cO \times \bR_+^n$.
This preference is determined by $v_i \in \cV_i$ and secondary goals together.
We assume that $\succ_i$ can be decomposed lexicographically into:
\begin{itemize}
    \item An internal component over internal outcome-payment pairs $\cO_i \times \bR$ (denoted by $\succ_i^\pri$ for brevity) induced by the valuation $v_i$ in the straightforward way, such that internal outcome-payment pairs $(o_i, q_i)$ giving higher quasi-linear utility $v_i(o_i) - q_i$ are preferred, and ties are broken according to the tiebreaking ordering $\vartriangleright_i$.\footnote{%
        This allows one to rephrase the definition of PIC in the following way: a mechanism $M = (f, p)$ is PIC, if for all $i$, $v$ and $v_i'$,
            $(f_i(v_i, v_{-i}), p_i(v_i, v_{-i})) \succeq_i^\pri (f_i(v_i', v_{-i}), p_i(v_i', v_{-i}))$.
    }%
    Here, we assume the tiebreaking ordering $\vartriangleright_i$ is strict, which ensures that $\succ_i^\pri$ is a linear order over the continuum space $\cO_i \times \bR$.  The case with possible indifferences is treated in Appendix~\ref{app:tiebreaking}.
    \item An external component (capturing secondary goals) $\succ_i^\pub$ over $\cO_{-i} \times \bR_+^{n - 1}$.
\end{itemize}
Somewhat abusing notation, we also write $\mathbin{\succ_i} = (v_i, \mathbin{\succ_i^\pub})$.
For any two combinations of outcomes and payments $(o, q)$ and $(o', q') \in \cO \times \bR_+^n$, $(o, q) \succeq_i (o', q')$ iff: $(o_i, q_i) \succ_i^\pri (o_i', q_i')$, or $(o_i, q_i) = (o_i', q_i')$ and $(o_{-i}, q_{-i}) \succeq_i^\pub (o_{-i}', q_{-i}')$.
That is, the valuation decides the agent's preference between two combinations, unless the internal outcome-payment pairs are exactly the same, in which case the external component decides the preference.
Conceptually, the external component captures secondary goals, and its secondary nature is captured by the fact that it only matters when the agent is indifferent with respect to the internal component.

We place no restrictions on the external component of the preference $\succ_i^{\pub}$: we allow each $\succ_i^{\pub}$ to be any ordering over the outcomes $\cO_{-i} \times \bR_+^{n - 1}$ of other agents. That is, we wish to design mechanisms that are robust to any possible secondary goal that an agent might have.\footnote{In Appendix~\ref{app:restricted_externalities} we show this assumption can in fact be significantly relaxed.}
In particular, a sensible $\succ_i^\pub$ need not be induced by a utility function with nice properties, e.g., continuity in other agents' payments.
Consider the following example: 
\begin{example}
    Two agents compete for one item in a second-price auction.
    Both agents' primary goal is to maximize quasi-linear utility, while their secondary goal is to avoid forcing each other to make ``inconvenient'' payments.
    That is, the external component $\succ_i^\pub$ of agent $i$'s preference prefers outcome-payment pairs $(o_{-i}, q_{-i})$ where $q_{-i}$ is an integral multiple of $0.01$, to all other outcome-payment pairs.
    Intuitively and informally, when one agent expects the other agent to have a higher value, they would round their own value to, say, the closest multiple of $0.01$ when placing the bid, so as to achieve their secondary goal.
    Note that if we were to capture the above secondary goal using a utility function, it would necessarily ``jump'' at each integral multiple of $0.01$.
\end{example}

In fact, the external component $\succ_i^\pub$ may not correspond to a utility function at all.
Consider the following example:
\begin{example}
    There are $3$ agents, and we focus on agent $1$, whose secondary goal itself is lexicographical: agent $1$ prefers higher payments made by agent $2$; fixing the payment made by agent $2$, agent $1$ prefers higher payments made by agent $3$.
    Formally, $\succ_1^\pub(o_{-1}, q_{-1})$ depends only on $q_2$ and $q_3$, and $(q_2, q_3) \succ_1^\pub (q_2', q_3')$ iff (1) $q_2 > q_2'$ or (2) $q_2 = q_2'$ and $q_3 > q_3'$.
    Intuitively, agent $1$'s secondary goal is essentially for agent $2$ (the major competitor) to suffer, but it would be even better if agent $3$ (the minor competitor) suffers at the same time.
    It is known that such a preference cannot be captured by any utility function over $q_2$ and $q_3$.
\end{example}

We now describe the behavior of agents.
Each agent $i$, given $i$'s valuation $v_i$, all other agents' valuations $v_{-i}$, and the external component of $i$'s preference $\succ_i^\pub$, reports a possibly nontruthful valuation $v_i' \in \cV_i$ to the mechanism, where the goal is to achieve a most preferable combination of outcomes and payments according to $\succ_i$.
As in environments without secondary goals, the exact behavior of agents also depends on what they know about each other.
In the rest of this paper, we focus on dominant-strategy incentive compatibility, which means an agent never gets more desirable combinations of outcomes and payments by misreporting their private valuation, no matter what other agents do (in particular, without loss of generality we can assume agents know the valuations and preferences of all other agents).
This is captured by the following definition.

\begin{definition}[Incentive Compatibility (IC)]
    A mechanism $M = (f, p)$ is incentive compatible, if for any agent $i$, valuation vector $v \in \cV$, public component of $i$'s preference $\mathbin{\succ_i^\pub}$, and possible deviation $v_i' \in \cV_i$,
    \[
        (f(v_i, v_{-i}), p(v_i, v_{-i})) \succeq_i (f(v_i', v_{-i}), p(v_i', v_{-i})),
    \]
    where $\mathbin{\succ_i} = (v_i, \mathbin{\succ_i^\pub})$.
\end{definition}

Note that IC as defined above is a stronger notion of robustness than PIC, since for any two combinations of outcomes and payments $(o, q)$ and $(o', q')$,
\[
    (o, q) \succeq_i (o', q') \implies (o_i, q_i) \succeq_i^\pri (o_i', q_i').
\]

\begin{proposition}
    Any IC mechanism $M$ is also PIC.
\end{proposition}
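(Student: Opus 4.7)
The plan is to observe that the proposition follows almost immediately from the lexicographic structure of the preferences $\prec_i$: since $\prec_i$ compares the internal components $(o_i, q_i)$ first and only uses the external component as a tiebreaker when the internal components coincide, any comparison that respects $\preceq_i$ must project down to the corresponding comparison under $\preceq_i^\pri$ on the internal components.

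Concretely, I would fix an arbitrary agent $i$, valuation vector $v \in \cV$, and deviation $v_i' \in \cV_i$, and instantiate the RwSG condition at these arguments together with any fixed external preference $\prec_i^\pub$. This yields
\[
    (f(v_i, v_{-i}), p(v_i, v_{-i})) \preceq_i (f(v_i', v_{-i}), p(v_i', v_{-i})).
\]
Unpacking the definition of $\preceq_i$ from Section~\ref{sec:prelim_ext}, this splits into two cases: either $(f_i(v_i, v_{-i}), p_i(v_i, v_{-i})) \prec_i^\pri (f_i(v_i', v_{-i}), p_i(v_i', v_{-i}))$, or the two internal outcome-payment pairs coincide. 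In both cases we obtain
\[
    (f_i(v_i, v_{-i}), p_i(v_i, v_{-i})) \preceq_i^\pri (f_i(v_i', v_{-i}), p_i(v_i', v_{-i})),
\]
which, by the rephrasing of IC in the footnote following the IC definition, is exactly the incentive-compatibility condition.

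No substantive obstacle is expected; the proposition is essentially a restatement of the fact that $\prec_i^\pri$ occupies the top layer of the lexicographic order defining $\prec_i$. The only minor subtlety to keep track of is that $\prec_i^\pri$ bakes in the tiebreaking rule $\prec_i^\tb$ in addition to the utility comparison, but since the tiebreaking rule is applied in the same way both in the definition of IC and inside the internal component of $\prec_i$, no additional reasoning is required.
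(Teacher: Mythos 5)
Your proposal is correct and matches the paper's own justification, which is exactly the observation that $(o, q) \preceq_i (o', q')$ implies $(o_i, q_i) \preceq_i^\pri (o_i', q_i')$ because $\prec_i^\pri$ sits at the top of the lexicographic order defining $\prec_i$. The case split and the appeal to the footnote's rephrasing of IC are exactly what is needed, and no further argument is required.
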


\section{Nonbossy Mechanisms}

In this section, we introduce the notion of nonbossy mechanisms, and show that it tightly characterizes the class of mechanisms that are robust with secondary goals.

\begin{definition}[Nonbossiness (NB)]
A mechanism $M = (f, p)$ is nonbossy (NB) if for all $i$, $v = (v_i, v_{-i}) \in \cV$ and $v_i' \in \cV_i$,
\begin{align*}
    \text{if:}\quad & f_i(v_i, v_{-i}) = f_i(v_i', v_{-i}) \text{ and } p_i(v_i, v_{-i}) = p_i(v_i', v_{-i}) \\
    \text{then:}\quad & f(v_i, v_{-i}) = f(v_i', v_{-i}) \text{ and } p(v_i, v_{-i}) = p(v_i', v_{-i}).
\end{align*}
\end{definition}

In words, the above definition says that a mechanism is nonbossy if no agent can change another agent's personal outcome or payment without changing their own personal outcome or payment.
Below we show that the semantics of nonbossiness extends well beyond the above definition --- in fact, restricted to PIC mechanisms, the family of nonbossy mechanisms is precisely the family of IC mechanisms in the presence of secondary goals.

\begin{theorem}[Semantics of Nonbossiness]
\label{thm:semantic}
    A mechanism is IC if and only if it is PIC and NB.
\end{theorem}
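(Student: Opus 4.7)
The plan is a direct case analysis built on the lexicographic structure $\mathbin{\prec_i} = (\mathbin{\prec_i^\pri}(v_i), \mathbin{\prec_i^\pub})$, and I would prove the two directions separately.

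For the forward direction (IC and NB imply \redit{RwSG}), fix an agent $i$, valuation profile $v = (v_i, v_{-i}) \in \cV$, deviation $v_i' \in \cV_i$, and an arbitrary external component $\prec_i^\pub$. I split on whether the personal outcome-payment pair changes. If $(f_i(v), p_i(v)) \neq (f_i(v_i', v_{-i}), p_i(v_i', v_{-i}))$, then IC forces a strict $\prec_i^\pri$-preference for the truthful report: either the truthful utility strictly exceeds the deviating one, or the utilities coincide, in which case the outcomes themselves must differ (equal outcomes and equal utilities would force equal payments) and the strict total order $\prec_i^\tb$ supplies a strict preference. A strict $\prec_i^\pri$ preference lifts through the lexicographic definition to a strict $\prec_i$ preference regardless of $\prec_i^\pub$. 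Otherwise $(f_i(v), p_i(v)) = (f_i(v_i', v_{-i}), p_i(v_i', v_{-i}))$, so NB forces $f(v) = f(v_i', v_{-i})$ and $p(v) = p(v_i', v_{-i})$; the two full combinations are literally equal and trivially satisfy $\preceq_i$.

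For the reverse direction, IC is already delivered by the preceding proposition, so it suffices to show \redit{RwSG} implies NB. Arguing by contrapositive, suppose NB fails: there exist $i$, $v$, $v_i'$ such that the personal coordinate is unchanged but some external coordinate differs, i.e., $(f_{-i}(v), p_{-i}(v)) \neq (f_{-i}(v_i', v_{-i}), p_{-i}(v_i', v_{-i}))$. Since no restriction is placed on $\prec_i^\pub$ and every ordering over $\cO_{-i} \times \bR_+^{n-1}$ is admissible, choose one that strictly prefers the deviant external combination over the truthful one. The internal components at the two reports are equal, so the lexicographic $\prec_i$ is decided entirely by $\prec_i^\pub$, giving a strict preference for reporting $v_i'$ and contradicting \redit{RwSG}.

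The only real subtlety is the first sub-case of the forward direction: one must verify that IC, combined with the strict tiebreaking order $\prec_i^\tb$, actually produces a \emph{strict} internal preference (not merely a weak one) whenever the personal outcome-payment pairs differ, so the preference survives the lexicographic embedding. Once that is pinned down, the rest is mechanical unpacking of definitions, and there is no substantive obstacle to overcome.
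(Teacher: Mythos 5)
Your proof is correct and follows essentially the same route as the paper's: the forward direction reduces to NB handling the case of an unchanged personal outcome-payment pair (with IC handling the rest via the lexicographic structure), and the reverse direction constructs an adversarial external preference to contradict \redit{RwSG}. The only cosmetic difference is that in the reverse direction you pick an arbitrary ordering favoring the deviation directly, whereas the paper aligns $\prec_i^\pub$ with some affected agent $j$'s internal preference so that the argument also survives the restricted-externality model of Appendix~\ref{app:restricted_externalities}; under the main-text assumption that every ordering is admissible, both choices are equally valid.
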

\begin{proof}[Proof Sketch]\footnote{For the sake of brevity, we defer most full proofs to Appendix \ref{sec:omitted}. Where instructive, we supply proof sketches in their place.}
If a mechanism is nonbossy, it is impossible for an agent's deviation to modify the external component of their preference without also modifying the internal component of their preference. Since in PIC mechanisms it is impossible for an agent to deviate and improve their internal preference, this means that nonbossy PIC mechanisms are IC. 

On the other hand, if a mechanism is not nonbossy, then there is a non-truthful deviation for some agent $i$ which modifies the outcomes of the other agents but not $i$. Such a mechanism cannot be IC, since it is possible that agent $i$ prefers this deviation in the external component of their preference.
\end{proof}

\paragraph{Deterministic vs.\ randomized mechanisms.}
We do not explicitly distinguish between deterministic and randomized mechanisms, as randomization can be captured by extending the outcome space to include all convex combinations of outcomes.
Nevertheless, we remark that nonbossiness is generally easy to achieve with randomization: for example, when each $\cV_i = \bR_+$ and each $\cO_i = [0, 1]$ (which correspond to single-parameter environments), a mechanism $(f, p)$ is nonbossy as long as each $f_i(v_i, v_{-i})$ is strictly monotone in $v_i$.
On the other hand, recall that Myerson's characterization \citep{myerson1981optimal} states that any PIC mechanism in single-parameter environments must be weakly monotone.
Therefore, one can easily adapt any PIC mechanism to be also nonbossy with negligible loss in the following way: run the original PIC mechanism with probability $1 - \eps$, and any strictly monotone PIC mechanism with probability $\eps$, where $\eps > 0$ is arbitrarily small.
In light of this, we focus on deterministic mechanisms in the rest of the paper: for example, we consider $\cO_i = \{0, 1\}$ rather than $[0, 1]$ in single-parameter environments.

\section{Payment Characterization for Nonbossy Mechanisms}
\label{sec:payment}

In this section, we study the basic structure of nonbossy mechanisms. Specifically, whenever an environment has an expressive enough class of valuation functions, we provide a strong characterization of the payment rules of nonbossy mechanisms in this environment. 
We mathematically formalize this notion of expressivity via what we call the ``upper semilattice property'', defined below.

\begin{definition}[Upper Semilattice Property]
\label{def:upper_semilattice}
    A pair $(\cV_i, \cO_i)$ has the upper semilattice property if for any $o_i \in \cO_i$, $v_i, v_i' \in \cV_i$, there exists $v_i'' \in \cV_i$ such that for all $o_i' \in \cO_i$,
    \[
        v_i''(o_i) - v_i''(o_i') \ge \max\{v_i(o_i) - v_i(o_i'), v_i'(o_i) - v_i'(o_i')\}.
    \]
    We say $v_i''$ is a common upper bound of $v_i$ and $v_i'$ with respect to $o_i$.
\end{definition}

Note that the above property is independent of the tiebreaking component of a valuation.
While the upper semilattice property may appear strong and/or counterintuitive, it turns out that most natural settings commonly studied in the context of mechanism design in fact have this property. For example, it is straightforward to check that in single-parameter settings (where a valuation $v_i \in \cV_{i}$ for a single agent is parameterized by an arbitrary nonnegative real number), Definition \ref{def:upper_semilattice} holds (e.g., if $o_i$ is allocation of the item and $o'_i$ is non-allocation, we can take $v''_i = \max(v_i, v'_i)$). In fact, the upper semilattice property holds for all the following settings:

\begin{itemize}
    \item
    {\bf Common outcome, complete domain:} $\cO_1 = \cO_2 = \dots = \cO_n$, and $\cO = \{(o_c, \dots, o_c) \mid o_c \in \cO_1 = \dots = \cO_n\}$.
    The primary component of each $\cV_i$ is the collection of all functions from $\cO_i$ to $\bR_+$.
    This corresponds to settings where the mechanism makes a single public decision that affects all agents.
    \item
    {\bf Single-parameter agents:} $\cO_1 = \cO_2 = \dots = \cO_n = \{0, 1\}$, and $\cO \subseteq \times_i \cO_i$.
    The primary componet of each $v_i \in \cV_i$ is described by a nonnegative number $x_i$, where $v_i(1) = x_i$ and $v_i(0) = 0$.
    This corresponds to settings where the mechanism allocates identical items to agents subject to an arbitrary feasibility constraint, where each agent is interested in at most $1$ item.
    \item
    {\bf Combinatorial auctions, single-minded agents:} there is a ground set $M$ of items, $\cO_1 = \cO_2 = \dots = \cO_n = 2^M$, and $\cO \subseteq \times_i \cO_i$.
    The primary component of each $v_i \in \cV_i$ is described by a subset $S_i$ of $M$ and a real number $x_i$, where $v_i(T) = x_i$ if $S_i \subseteq T$ and $v_i(T) = 0$ otherwise.
    This corresponds to settings where the mechanism allocates heterogeneous items to agents subject to an arbitrary feasibility constraint, where each agent is only interested in getting all items in a certain set.
    \item
    {\bf Combinatorial auctions, valuations ``between'' additive and XOS:} there is a ground set $M$ of items, $\cO_1 = \cO_2 = \dots = \cO_n = 2^M$, and $\cO \subseteq \times_i \cO_i$.
    The primary component of each $\cV_i$ is simultaneously a superset of all monotone additive valuations and a subset of all monotone XOS valuations.
    This corresponds to settings where the mechanism allocates heterogeneous items to agents subject to an arbitrary feasibility constraint, where possible valuations are rich enough to subsume all additive ones, but never go beyond XOS ones.
    \item
    {\bf Combinatorial auctions, valuations ``beyond'' subadditive:} there is a ground set $M$ of items, $\cO_1 = \cO_2 = \dots = \cO_n = 2^M$, and $\cO \subseteq \times_i \cO_i$.
    The primary component of each $\cV_i$ is simultaneously a superset of all subadditive valuations.
    This corresponds to settings where the mechanism allocates heterogeneous items to agents subject to an arbitrary feasibility constraint, where possible valuations are rich enough to subsume all monotone subadditive ones.
    \item
    {\bf Metric space:} there is a metric space $(X_i, d_i)$ for each agent $i$ where $O_i \subseteq X_i$.
    The primary component of each $v_i \in \cV_i$ is induced by a point $x_i \in X_i$, such that $v_i(o_i) = - d_i(x_i, o_i)$.
    This corresponds to settings where each agent is located in a metric space, which also contains all possible outcomes, and the agent wants the distance to the outcome to be as small as possible (e.g., in facility location games). 
\end{itemize}

\begin{proposition}
\label{prop:payment_applicability}
    All the above settings have the upper semilattice property, and therefore admit the payment characterization.
\end{proposition}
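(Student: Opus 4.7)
I will verify the upper semilattice property separately for each of the six environments, in each case exhibiting an explicit witness $v_i'' \in \cV_i$ as a function of $(v_i, v_i', o_i)$. Because Definition~\ref{def:upper_semilattice} only constrains differences $v_i''(o_i) - v_i''(o_i')$, the guiding principle is to pick $v_i''$ that concentrates its value on $o_i$ relative to every other element of $\cO_i$ (or, for the metric case, is centered at $o_i$ itself).

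For the common-outcome setting, $\cV_i$ contains every nonnegative function, so I take $v_i''(o_i) := \max\{v_i(o_i), v_i'(o_i)\}$ and $v_i''(x) := 0$ for $x \neq o_i$; then $v_i''(o_i) - v_i''(o_i') \ge v_i(o_i) \ge v_i(o_i) - v_i(o_i')$ by nonnegativity of $v_i, v_i'$. For single-parameter agents I set $x_i'' := \max(x_i, x_i')$ when $o_i = 1$ and $x_i'' := 0$ when $o_i = 0$; checking the two choices of $o_i' \in \{0,1\}$ is immediate. For single-minded agents I take the single-minded valuation with desired set $o_i$ and value $\max(x_i, x_i')$; when $o_i \not\subseteq o_i'$ the left-hand side equals $\max(x_i, x_i')$, which dominates both $v_i(o_i) \le x_i$ and $v_i'(o_i) \le x_i'$, while when $o_i \subseteq o_i'$ both sides are nonpositive (the right-hand side because each of $v_i(o_i) - v_i(o_i')$ and $v_i'(o_i) - v_i'(o_i')$ is either $0$ or $-x_i, -x_i'$).

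For the two combinatorial cases (additive-to-XOS, and superset of monotone subadditive), I use a uniform construction: let $v_i''$ be the additive valuation with weight $C := \max\{v_i(o_i), v_i'(o_i)\}$ on each item in $o_i$ and weight $0$ on each item outside $o_i$. Such an additive $v_i''$ belongs to $\cV_i$ in both settings, since additive is contained in both XOS and subadditive. Then $v_i''(o_i) - v_i''(o_i') = |o_i \setminus o_i'| \cdot C$, which dominates $v_i(o_i) - v_i(o_i') \le v_i(o_i) \le C$ when $o_i \not\subseteq o_i'$ (as $|o_i \setminus o_i'| \ge 1$), and equals $0$ when $o_i \subseteq o_i'$; in the latter case, monotonicity of the valuations in $\cV_i$ (assumed throughout these combinatorial classes) yields $v_i(o_i) - v_i(o_i') \le 0$ and similarly for $v_i'$. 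For the metric setting I take $v_i''$ to be the valuation induced by placing $x_i'' := o_i$, which is valid since $\cO_i \subseteq X_i$; then $v_i''(o_i) - v_i''(o_i') = d_i(o_i, o_i')$, and the reverse triangle inequality $d_i(x_i, o_i') - d_i(x_i, o_i) \le d_i(o_i, o_i')$ (applied to both $x_i$ and $x_i'$) gives exactly the required bound.

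The proof is a routine case-by-case verification. The main obstacle is choosing the right witness in each setting; once this is done, each inequality reduces to a short algebraic check or (for the metric case) to the triangle inequality. The subtlest step is ensuring $v_i''$ remains inside the specified $\cV_i$: this is immediate for the common-outcome, single-parameter, and single-minded cases by construction, and for both combinatorial cases it is handled gracefully by the hierarchy additive $\subseteq$ XOS $\subseteq$ subadditive, which also supplies the monotonicity needed to treat $o_i \subseteq o_i'$.
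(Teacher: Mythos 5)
Your proof is correct, and for four of the six settings (common outcome, single-parameter, single-minded, metric) your witnesses are identical to the paper's. Where you genuinely diverge is in the two combinatorial-auction cases: the paper tailors the witness to each class --- for the additive-to-XOS setting it extracts the additive clause $c_i$ achieving $v_i(o_i) = c_i(o_i)$ and takes itemwise maxima of the two clauses restricted to $o_i$, while for the beyond-subadditive setting it builds a two-level function taking values $0$, $C$, and $2C$. You instead use a single uniform additive valuation with weight $\max\{v_i(o_i), v_i'(o_i)\}$ on each item of $o_i$, which lies in $\cV_i$ in both settings because additive valuations sit below XOS and below subadditive in the hierarchy. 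Your version is shorter and unifies the two cases, relying only on monotonicity and nonnegativity of the valuations (both of which the paper's own argument also implicitly uses, e.g.\ in the step $v_i(M) \ge v_i(o_i) - v_i(o_i')$); the paper's clause-based witness is pointwise tighter but that extra tightness is not needed anywhere, since Definition~\ref{def:upper_semilattice} only asks for existence of \emph{some} common upper bound. The only loose phrasing is in your single-minded case (``both sides are nonpositive''), but the underlying check is right: when $o_i \subseteq o_i'$ the left-hand side is exactly $0$ and the right-hand side is at most $0$.
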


We now present our payment characterization for nonbossy mechanisms in such environments.
Our characterization states that whenever the environment has the upper semilattice property, for any PIC and NB mechanism, the payment rule can be written as a function of the outcome vector only.

\begin{theorem}[Payment Characterization]
\label{thm:payment}
    If for all $i \in [n]$, $(\cV_i, \cO_i)$ has the upper semilattice property, then for any PIC and NB mechanism $M = (f, p)$,
    \[
        p(v_1, \dots, v_n) = p(v_1', \dots, v_n') \quad \text{if} \quad f(v_1, \dots, v_n) = f(v_1', \dots, v_n').
    \]
\end{theorem}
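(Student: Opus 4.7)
The plan is to bridge $v$ and $v'$ via a common ``upper envelope'' profile $v^*$ constructed coordinate-by-coordinate using the upper semilattice property, then transfer both the outcome and the payments from $v$ through $v^*$ to $v'$ along a sequence of single-coordinate changes, each of which preserves $(f,p)$ by combining IC and NB.

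The key lemma to establish first is: if $f_i(v_i, v_{-i}) = o_i$ and $v_i^* \in \cV_i$ satisfies $v_i^*(o_i) - v_i^*(o_i'') \ge v_i(o_i) - v_i(o_i'')$ for every $o_i'' \in \cO_i$, then $f(v_i^*, v_{-i}) = f(v_i, v_{-i})$ and $p(v_i^*, v_{-i}) = p(v_i, v_{-i})$. The proof uses the taxation principle: both $(o_i, q_i) = (f_i(v_i, v_{-i}), p_i(v_i, v_{-i}))$ and $(o_i^*, q_i^*) = (f_i(v_i^*, v_{-i}), p_i(v_i^*, v_{-i}))$ lie in the menu induced by varying agent $i$'s report. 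IC for $v_i$ applied to $(o_i^*, q_i^*)$ gives a lower bound on $q_i^* - q_i$ in terms of $v_i$, and the upper bound inequality then transfers this to a weak preference $v_i^*(o_i) - q_i \ge v_i^*(o_i^*) - q_i^*$. A short case analysis---strict inequality contradicts $v_i^*$ selecting $(o_i^*, q_i^*)$, while equality forces both $v_i$ and $v_i^*$ to be indifferent and hence to be resolved by the same fixed tiebreaking order $\prec_i^{\tb}$---pins down $(o_i^*, q_i^*) = (o_i, q_i)$. NB then extends this coordinate-wise identity to $f(v_i^*, v_{-i}) = f(v_i, v_{-i})$ and $p(v_i^*, v_{-i}) = p(v_i, v_{-i})$.

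Given $v, v' \in \cV$ with $f(v) = f(v') = o$, for each $i$ invoke the upper semilattice property to pick $v_i^* \in \cV_i$ that is simultaneously a common upper bound of $v_i$ and $v_i'$ with respect to $o_i$, and set $v^* = (v_1^*, \ldots, v_n^*)$. Define the hybrid profiles $w^{(k)} = (v_1^*, \ldots, v_k^*, v_{k+1}, \ldots, v_n)$, so that $w^{(0)} = v$ and $w^{(n)} = v^*$. An induction on $k$---applying the key lemma at coordinate $k+1$ using the inequality that compares $v_{k+1}^*$ to $v_{k+1}$---yields $f(w^{(k)}) = o$ and $p(w^{(k)}) = p(v)$ for all $k$, hence $f(v^*) = o$ and $p(v^*) = p(v)$. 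A symmetric induction starting from $v'$, this time invoking the inequality that compares $v_i^*$ to $v_i'$, gives $p(v^*) = p(v')$. Combining, $p(v) = p(v^*) = p(v')$.

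The most delicate point is the tiebreaking analysis in the key lemma: one must verify that whenever $v_i^*$ is indifferent between $(o_i, q_i)$ and some alternative menu entry, the same order $\prec_i^{\tb}$ that broke ties for $v_i$ resolves them identically for $v_i^*$. This works precisely because $\prec_i^{\tb}$ depends only on the outcome and not on the valuation; once this subtlety is handled, the rest---the inductive coordinate replacement and the invocation of NB---is essentially mechanical.
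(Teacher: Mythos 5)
Your proposal is correct and follows essentially the same route as the paper's proof: construct a coordinate-wise common upper bound profile via the upper semilattice property, replace coordinates one at a time, use IC to show each replacement preserves agent $i$'s personal outcome and payment, and invoke NB to propagate this to all agents. Your explicit taxation-principle and tiebreaking analysis fills in a step the paper's proof states only informally (``$o_i$ can only become more desirable''), but the underlying argument is identical.
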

\begin{proof}
    Fix any PIC and NB mechanism $(f, p)$.
    Consider any two valuation profiles $v_1, \dots, v_n$ and $v_1', \dots, v_n'$ where 
    \[
        f(v_1, \dots, v_n) = f(v_1', \dots, v_n') = (o_1, \dots, o_n).
    \]
    For each $i \in [n]$, let $v_i''$ be a common upper bound of $v_i$ and $v_i'$, satisfying for any $o_i' \in \cO_i \setminus \{o_i\}$,
    \[
        v_i''(o_i) - v_i''(o_i') \ge \max\{v_i(o_i) - v_i(o_i'), v_i'(o_i) - v_i'(o_i')\}.
    \]
    Moreover, we choose $v_i''$ such that its tiebreaking component orders $o_i$ before all other outcomes in $\cO_i$.
    We inductively argue that for any $i \in [n]$,
    \[
        f(v_1, \dots, v_n) = f(v_1'', \dots, v_i'', v_{i + 1}, \dots, v_n) \quad \text{and} \quad f(v_1', \dots, v_n') = f(v_1'', \dots, v_i'', v_{i + 1}', \dots, v_n'),
    \]
    and
    \[
        p(v_1, \dots, v_n) = p(v_1'', \dots, v_i'', v_{i + 1}, \dots, v_n) \quad \text{and} \quad p(v_1', \dots, v_n') = p(v_1'', \dots, v_i'', v_{i + 1}', \dots, v_n').
    \]
    In particular, this implies
    \[ p(v_1, \dots, v_n) = p(v_1'', \dots, v_n'') = p(v_1', \dots, v_n').  \]
    To show this, we only need to argue that if agent $i$ changes their reported valuation from $v_i$ to $v''_i$, then their outcome does not change, i.e.,
    \[
        f_i(v_1'', \dots, v_{i - 1}'', v_i, v_{i+1} \dots, v_n) = f_i(v_1'', \dots, v_{i-1}'', v_i'', v_{i + 1}, \dots, v_n).
    \]
    If this is the case, then PIC implies that agent $i$ must receive the same payment under both types, i.e.,
    \[
        p_i(v_1'', \dots, v_{i - 1}'', v_i, v_{i+1} \dots, v_n) = p_i(v_1'', \dots, v_{i-1}'', v_i'', v_{i + 1}, \dots, v_n),
    \]
    and NB implies that the payments and outcomes do not change for any other agent (which was the inductive claim to be proved).
    Consider the menu agent $i$ faces when the other agents' valuations are $(v_1'', \dots, v_{i - 1}'', v_{i + 1}, \dots, v_n)$.
    When $i$'s valuation function is $v_i$, $o_i$ is the utility-maximizing option.
    Moreover, by the construction of $v_i''$, when $i$'s valuation function is $v_i''$, $o_i$ can only become even more desirable compared to other options, so PIC ensures that
    \[
        f_i(v_1'', \dots, v_i'', v_{i + 1}, \dots, v_n) = o_i.
    \]
    The other part of the argument (for $v'$) is completely symmetric.
\end{proof}

The above characterization generalizes Theorem 2 in \citep{schummer2000eliciting}, which establishes the same property in the special case of allocating heterogeneous items to unit-demand agents (i.e., matching).
In fact, \citet{schummer2000eliciting} studies a model where the mechanism allocates both a set of indivisible items and a fixed amount of a divisible good (e.g., money), and the results (including Theorem 2) apply regardless of whether negative consumption of the divisible good is allowed.

\paragraph{Necessity of the upper semilattice property.}
One may naturally wonder if the upper semilattice property is necessary for the characterization --- below we show that it in fact is, even if we further restrict our attention to IR mechanisms.

\begin{proposition}
\label{prop:payment_counterexample}
    There exists a PIC, NB, and IR mechanism $M = (f, p)$ where the above payment characterization fails.
\end{proposition}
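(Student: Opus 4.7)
The plan is to exhibit an explicit small counterexample in a setting where the upper semilattice property fails. I will take two agents with common outcome space $\cO_1 = \cO_2 = \{A, B, C\}$ (and joint outcome space containing the diagonal) and a two-element valuation space $\cV_1 = \cV_2 = \{v^1, v^2\}$ defined by
\[
    v^1(A)=2,\ v^1(B)=0,\ v^1(C)=1, \qquad v^2(A)=0,\ v^2(B)=2,\ v^2(C)=1.
\]
It is immediate that the upper semilattice property fails for the outcome $C$: a common upper bound of $v^1$ and $v^2$ with respect to $C$ would simultaneously have to satisfy $v''(C)-v''(A)\ge 1$ and $v''(C)-v''(B)\ge 1$, and neither $v^1$ nor $v^2$ does.

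Next I will specify the mechanism on the four valuation profiles: on the ``agreeing'' profiles take $f(v^1,v^1) = (A,A)$ and $f(v^2,v^2) = (B,B)$, and on each ``disagreeing'' profile take $f(v^1,v^2) = f(v^2,v^1) = (C,C)$. For payments I will use the asymmetric assignment $p(v^1,v^1) = (0,0)$, $p(v^2,v^2)=(\tfrac12,\tfrac12)$, $p(v^1,v^2)=(0,\tfrac12)$, and $p(v^2,v^1)=(\tfrac12,0)$. Then $f(v^1,v^2) = f(v^2,v^1) = (C,C)$ but $p(v^1,v^2) \neq p(v^2,v^1)$, which is the desired failure of the payment characterization.

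It remains to check that this mechanism is IC, NB, and IR. Nonbossiness is vacuous here: whenever an agent switches between $v^1$ and $v^2$ while the other agent's report is fixed, that agent's own personal outcome changes (between $A$ and $C$, or between $C$ and $B$), so the hypothesis of the NB implication is never activated. Incentive-compatibility reduces to eight one-step deviation checks (four per agent, by symmetry only four are substantive), and the numbers are chosen so that every truthful utility strictly exceeds its deviating counterpart by at least $\tfrac12$ (for instance, for agent $1$ at $(v^2,v^2)$ the truthful utility is $2-\tfrac12=\tfrac32$ while the deviation to $v^1$ yields outcome $C$ at price $0$ with utility $1$). Individual rationality is immediate by inspection, since every truthful utility is nonnegative.

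The only real design choice is the payment magnitudes: they must be small enough that the valuation loss from a deviation always exceeds the potential payment savings, but strictly positive and asymmetric across the two disagreeing profiles so that the payment characterization is actually violated. The value $\tfrac12$ comfortably threads both constraints, so I expect no additional subtleties beyond the routine case analysis outlined above.
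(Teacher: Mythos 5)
Your construction is correct and is essentially the same as the paper's: a two-agent common-outcome setting with three outcomes and two valuations per agent, where nonbossiness holds vacuously because every unilateral deviation changes the common outcome, and one outcome $(C,C)$ is reached from two profiles with different (asymmetric) payment vectors, with all IC comparisons strict. The only cosmetic difference is that the paper places the repeated outcome on the ``matching'' profiles and uses an $\eps$-perturbed valuation table, whereas you place it on the disagreeing profiles and charge $\tfrac12$ exactly when an agent reports $v^2$; the verification is the same routine case check in both.
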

\begin{proof}
    Let $n = 2$, $\cV_1 = \{x_1, x_2\}$, $\cV_2 = \{y_1, y_2\}$, $\cO_1 = \cO_2 = \{o_1, o_2, o_3\}$, and $\cO = \{(o_1, o_1), (o_2, o_2), (o_3, o_3)\}$.
    Let the valuations be such that
    \[
        \begin{array}{cccc}
            & o_1 & o_2 & o_3 \\
            x_1 & 1 + \eps & 0 & 1 + 2 \eps \\
            x_2 & 1 + \eps & 1 & 2 \eps \\
            y_1 & 1 + \eps & 1 & 2 \eps \\
            y_2 & 1 + \eps & 0 & 1 + 2 \eps
        \end{array},
    \]
    where $\eps > 0$ is a small quantity (the only goal that $\eps$ serves is removing the need for tiebreaking).
    Consider the following mechanism.
    \begin{itemize}
        \item $f(x_1, y_1) = (o_1, o_1)$, $p(x_1, y_1) = (1, 0)$,
        \item $f(x_2, y_2) = (o_1, o_1)$, $p(x_2, y_2) = (0, 1)$,
        \item $f(x_2, y_1) = (o_2, o_2)$, $p(x_2, y_1) = (0, 0)$,
        \item $f(x_1, y_2) = (o_3, o_3)$, $p(x_1, y_2) = (0, 0)$.
    \end{itemize}
    This does not satisfy the payment characterization, because when the outcome vector is $(o_1, o_1)$, the payment vector can either be $(1, 0)$ or $(0, 1)$.
    On the other hand, observe that the mechanism is NB because the (common) outcome changes whenever either agent's valuation changes, and verifying that the mechanism is PIC simply involves checking all relevant cases: when agent $2$ reports $y_1$, agent $1$ would not misreport $x_1$ as $x_2$, because that would decrease their utility from $\eps$ to $0$.
    Agent $1$ also would not misreport $x_2$ as $x_1$, because that would decrease their utility from $1$ to $\eps$.
    When agent $2$ reports $y_2$, agent $1$ would not misreport $x_1$ as $x_2$, because that would decrease their utility from $1 + 2\eps$ to $1 + \eps$.
    Agent $1$ also would not misreport $x_2$ as $x_1$, because that would decrease their utility from $1 + \eps$ to $2\eps$.
    The case of agent $2$ is symmetric.
\end{proof}

\section{Full Characterizations in Specific Settings}

In this section, we present full characterizations of PIC, NB, and possibly IR mechanisms in several specific settings.

\subsection{Gibbard-Satterthwaite Style Characterization for the Common Outcome Setting}

First we present a characterization for the setting where all agents share a common outcome, as described in Section~\ref{sec:payment}.
Somewhat surprisingly, this characterization shows that the seemingly richer space of PIC and NB mechanisms (which, notably, involve payments) contains only ``trivial mechanisms''. These mechanisms must have the same form as the characterization of non-manipulable social choice rules in the celebrated result of Gibbard and Satterthwaite \citep{gibbard1973manipulation,satterthwaite1975strategy}: they are either dictatorships, or only two outcomes are relevant.

\begin{theorem}\label{thm:common}
    Consider the setting where all agents share a common outcome and valuations are from the complete domain, i.e., $\cO = \{(o_c, \dots, o_c) \mid o_c \in \cO_1 = \dots = \cO_n\}$ where $|\cO_i| < \infty$, and each $\cV_i$ is the collection of all functions from $\cO_i$ to $\bR_+$.
    In this setting, any PIC and NB mechanism satisfies the Gibbard-Satterthwaite characterization, i.e., either it is dictatorship or it uses only two outcomes.
    Formally, any PIC and NB mechanism $M = (f, p)$ in this setting must satisfy at least one of the following two conditions:
    \begin{itemize}
        \item Dictatorship: there exists an agent $i$ such that $(f(v), p(v)) = (f(v_i), p(v_i))$, i.e., $f$ and $p$ depend only on agent $i$'s valuation.
        \item Two outcomes: the image of $(f, p)$ has cardinality at most $2$, i.e., $|\{(f(v), p(v)) \mid v \in \cV\}| \le 2$.
    \end{itemize}
\end{theorem}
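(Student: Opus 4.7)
The plan is to reduce the problem to the classical Gibbard–Satterthwaite theorem, as hinted by the discussion in the introduction. Since the common-outcome complete-domain setting has the upper semilattice property (Proposition~\ref{prop:payment_applicability}), Theorem~\ref{thm:payment} applies: the payment is a function of the outcome vector alone. Because all agents share a single common outcome $o_c \in \cO_c$, this means there is a fixed vector $q : \cO_c \to \bR_+^n$ such that $p(v) = q(f(v))$ for every valuation profile $v$.

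Next, I would use $q$ to recast the mechanism as a pure (payment-free) social choice function on strict linear orders. For any strict linear order $\succ_i$ on $\cO_c$, one can pick any real-valued representation $r : \cO_c \to \bR$ of $\succ_i$ and set $v_i(o_c) = q_i(o_c) + r(o_c)$; then agent $i$'s induced utility ranking $o_c \mapsto v_i(o_c) - q_i(o_c)$ equals $\succ_i$. For each profile $\succ = (\succ_1,\dots,\succ_n)$ of strict linear orders on $\cO_c$, define $g(\succ) := f(v_1,\dots,v_n)$ for any such representatives. The value $g(\succ)$ is well-defined: by IC, fixing $v_{-i}$, agent $i$'s reported type only influences the outcome through its choice on a fixed menu of outcome-payment pairs of the form $\{(o_c, q_i(o_c))\}$, and when $v_i$ induces a strict preference this maximization has a unique solution that depends only on $\succ_i$; changing coordinates one at a time then shows $f(v) = f(v')$ whenever $v$ and $v'$ represent the same $\succ$. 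Strategy-proofness of $g$ is then immediate from IC of $M$: any deviation from $\succ_i$ to $\succ_i'$ corresponds to reporting a different valuation, and IC forbids that improving agent $i$'s utility (equivalently, their $\succ_i$-rank of the outcome).

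Having obtained a strategy-proof social choice function $g$ over strict preferences in the complete domain, Gibbard–Satterthwaite gives the dichotomy: either $|\mathrm{image}(g)| \le 2$, or $g$ is dictatorial, i.e.\ there exists an agent $i$ such that $g(\succ)$ equals the $\succ_i$-maximum among the outcomes in $\mathrm{image}(g)$. In the first case, $\{f(v) : v \in \cV\}$ has at most two elements on the dense set of valuations inducing strict preferences, giving the two-outcome conclusion. In the second case, $f(v)$ depends only on $v_i$ whenever $v_i - q_i(\cdot)$ has no ties; since payments are determined by outcomes, $(f,p)$ depends only on $v_i$ on this set, giving dictatorship.

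The main technical nuisance, and the step I would handle most carefully, is extending these conclusions from the set of profiles inducing strict preferences to \emph{all} of $\cV$. For any $v$, consider perturbations $v_j \mapsto v_j + \eps \cdot r_j$ with generic $r_j$ and $\eps \downarrow 0$; since the tiebreaking order $\prec^{\tb}$ is the same limit of such perturbations (for appropriately chosen $r_j$), IC plus the finiteness of $\cO_c$ forces $f(v_1+\eps r_1, \dots, v_n+\eps r_n) = f(v_1,\dots,v_n)$ for all sufficiently small $\eps > 0$. Applying this perturbation trick to propagate the dictatorship conclusion (respectively, the two-outcome conclusion) from strict-preference profiles to arbitrary $v \in \cV$ closes the argument, and then NB (via the payment characterization) guarantees that payments inherit the same dependence structure as outcomes.
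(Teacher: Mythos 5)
Your proof follows essentially the same route as the paper's: apply the payment characterization (Theorem~\ref{thm:payment}) to make payments a function of the common outcome, absorb them into the transformed valuations $v_i(o_c) - q_i(o_c)$ so as to obtain a payment-free, strategy-proof social choice function realizing every profile of strict orders, and invoke Gibbard--Satterthwaite. The only difference is that you spell out the well-definedness of the induced rule and the extension from strict-preference profiles to tied ones (via a tiebreak-consistent perturbation), steps the paper leaves implicit; both are handled correctly.
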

\begin{proof}[Proof Sketch]
If the mechanism $M$ is NB, then by our payment characterization, the payment $p$ is solely a function of the eventual outcome $o$. Consider the social choice problem (where agents wish to elect a single outcome) where the preferences of agent $i$ over the $|\cO|$ outcomes are given by the values of $v_{i}(o) - p_{i}(o)$ in increasing order. Then $M$ is a mechanism that solves this social choice problem; moreover, we can show that if $M$ is PIC for the original problem, it must be PIC for this social choice problem, and that every possible collection of $n$ orderings over the outcomes can occur. It follows from the Gibbard-Sattherthwaite theorem that $M$ must take the above form.
\end{proof}

See Appendix~\ref{sec:omitted} for the full proof of the theorem.
Given the equivalence between IC and nonbossiness (Theorem~\ref{thm:semantic}), the above result can be interpreted in the following way: when choosing a common outcome, only trivial mechanisms can be robust against strategic behavior if agents care (even infinitesimally) about each other's utilities.

\subsection{Characterizations for Single-Parameter Environments}

We now turn to single-parameter environments, where each $\cO_i = \{0, 1\}$, and the primary component of each $v_i \in \cV_i$ satisfies $v_i(1) \ge v_i(0) = 0$.

\paragraph{Notational shorthand.}
For ease of presentation, in single-parameter environments, we use the following notation that incorporates both components of a valuation: for a nonnegative real number $x \in \bR_+$, we say $v_i = x_+$ if the primary component satisfies $v_i(1) = x$ and the tiebreaking component satisfies $1 \vartriangleright_i 0$; similarly, we say $v_i = x_-$ if $0 \vartriangleright_i 1$.
We also write $\cV_i = \eR = \bR_+ \times \{\mathbin{+}, \mathbin{-}\}$, i.e., $\cV_i$ is isomorphic to the set of ``extended'' nonnegative real numbers.
Note that $\eR$ is naturally ordered: for each $x \in \bR_+$, we say $x_+ > x_-$, and for each $x, y \in \bR_+$ where $x < y$, we say $x_+ < y_-$.
Similarly, we also allow comparisons between extended real numbers and standard real numbers: for each $x \in \bR_+$, we say $x_- < x < x_+$.
Thus, $\eR \cup \bR_+$ is equipped with a natural linear order, which will be useful throughout the rest of the paper.

\paragraph{The single-item characterization.}
Our first characterization applies to the single-item setting, where $\cO = \left\{o \in \{0, 1\}^n \,\middle|\, \sum_{i \in [n]} o_i \le 1\right\}$.
The characterization involves (the direct versions) of sequential posted-price mechanisms.
A sequential posted-price mechanism approaches each agent at most once in a predetermined order, makes a (possibly personalized) take-it-or-leave-it offer to each agent visited, and allocates the item to the first agent who accepts the offer.
The direct implementation of such a mechanism is defined by a permutation $\sigma: [n] \to [n]$ of all agents, together with a price $p_i \in \bR_+ \cup \{\infty\}$ for each agent $i \in [n]$.
The mechanism examines each agent $\sigma(j)$ for $j \in [n]$ and allocates the item to the first agent in the permutation, i.e., $\sigma(j)$ with the smallest $j$, where $v_{\sigma(j)} \ge p_{\sigma(j)}$.
If no such agent exists, then the mechanism makes no allocation.

\begin{theorem}
\label{thm:single-item}
    In the single-item setting, any PIC, IR, and NB mechanism is a sequential posted-price mechanism.
\end{theorem}

It is easy to see that any sequential posted-price mechanism is PIC, IR, and NB.
Conceptually, the above characterization says that sequential posted-price mechanisms are the only format of single-item auctions that are robust in the presence of secondary goals.

\paragraph{The single-parameter characterization via decision lists.}
Theorem \ref{thm:single-item} is a corollary of our characterization of nonbossy mechanisms in general single-parameter environments, which states that in the single-parameter setting, any PIC, IR, and NB mechanism $M = (f, p)$ is a ``decision list'' of the following form: each feasible outcome $o \in \cO$ is associated with a payment vector $p(o)$, where $p_i(o) = 0$ if $o_i = 0$ for each $i$, and a set $E(o) \subseteq \cO$ of feasible outcomes (we say $E(o)$ is the exception list of $o$).
We say a valuation vector $v$ satisfies an outcome $o$ if for all $i$ where $o_i = 1$, $v_i \ge p_i(o)$.\footnote{
    Note that here we are comparing $v_i \in \eR$ with $p_i(o) \in \bR_+$.
    The comparison is based on the natural linear order over $\eR \cup \bR_+$ discussed above.
    Intuitively, $v$ satisfying $o$ means when the agents' valuations are $v \in \eR^n$, each agent $i$ where $o_i = 1$ is willing to pay $p_i(o) \in \bR_+$ for an item, taking into consideration how they break ties.
}
We choose a linear order among all feasible outcomes, and the mechanism checks these outcomes in that order, and outputs the first outcome $o$ satisfied by $v$ such that no feasible outcome $o' \in E(o)$ is also satisfied.
This is formally captured by the following claim.

\begin{theorem}
\label{thm:single-parameter}
    In the single-parameter setting, for any PIC, IR and NB mechanism $M = (f, p)$, there exists a linear order of outcomes $o^{(1)}, \dots, o^{(k)}$ where $k = |\cO|$, such that for all $v \in \cV$, $f(v) = o^{(j)}$ and $p(v) = p(o^{(j)})$, where
    \[
        j = \min\{j' \in [k]: v \text{ satisfies } o^{(j')} \wedge (\forall o \in E(o^{(j')}),\, v \text{ does not satisfy } o)\}.
    \]
    Moreover, there exists a way to choose the exception lists such that the order does not matter, i.e., for any valuation vector $v$, there is precisely one $j$ such that
    \[
        v \text{ satisfies } o^{(j)} \wedge (\forall o \in E(o^{(j)}),\, v \text{ does not satisfy } o).
    \]
\end{theorem}

Unlike in the single-item case, the single-parameter characterization is considerably more general than sequential posted-price mechanisms.
In particular, PIC, IR, and NB mechanisms allow posting a vector of prices to all agents simultaneously, which is accepted if and only if each agent accepts their personal component of the price vector.
We provide an overview of proof techniques used in our single-item and single-parameter characterizations in Appendix~\ref{app:single-parameter_overview}.

\paragraph{Implications of the single-parameter characterization.}
Although the full statement of Theorem~\ref{thm:single-parameter} is quite lengthy, it does have useful and neat implications.
For example, Theorem~\ref{thm:single-parameter} implies that the space of mechanisms that are PIC, IR, and NB is finite-dimensional.
On the negative side, this immediately rules out several popular auction formats such as first-price auctions and second-price auctions.
On the positive side, since the design space is finite-dimensional, in principle, one can search for the optimal mechanism by brute force.
The interpretation of sequentially posting vectors of prices also provides a more concrete and operable way to design mechanisms that are PIC, IR and NB.

\paragraph{Condition for order-oblivious decision lists to be PIC, IR and NB.}
In Theorem~\ref{thm:single-parameter}, we prove a necessary condition for a mechanism to be PIC, IR, and NB: it must take the form of an order-oblivious decision list with exceptions.
However, not all mechanisms induced by order-oblivious decision lists with exceptions are PIC.\footnote{Though, as we show in the proof of Lemma~\ref{lem:order-oblivious_ic}, they are always IR and NB.}
Below we give a necessary and sufficient condition for an order-oblivious decision list with exceptions to be PIC, IR, and NB, which, together with Theorem~\ref{thm:single-parameter}, constitutes an if-and-only-if characterization of PIC, IR and NB mechanisms in single-parameter environments, i.e., a mechanism is PIC, IR and NB, if and only if it can be written as an order-oblivious decision list with exceptions satisfying the condition in the following claim (proof defered to Appendix~\ref{sec:omitted}).

\begin{lemma}
\label{lem:order-oblivious_ic}
    Consider any single-parameter setting defined by $\cO \subseteq \{0, 1\}^n$.
    Fix any order-oblivious decision list defined by exception lists $\{E(o)\}_{o \in \cO}$ and associated price vectors $\{p(o)\}_{o \in \cO}$ which induces a well-defined mechanism.
    For each vector $u \in \eR^n$, let
    \[
        O^u = \{o \in \cO \mid \forall i,\, u_i \ge p_i(o)\}.
    \]
    Then, the order-oblivious decision list induces a PIC, IR and NB mechanism if and only if it satisfies the following condition: for each agent $i$ and vector $u \in \eR^n$, there exists at most one $o \in O^u$ satisfying:
    \[
        o_i = 1 \quad \text{and} \quad \{o' \in E(o) \cap O^u \mid p_i(o') \le p_i(o)\} = \emptyset.
    \]
\end{lemma}

Now we can put Theorem~\ref{thm:single-parameter} and Lemma~\ref{lem:order-oblivious_ic} together and obtain the following claim.

\begin{corollary}
\label{cor:single-parameter-iff}
    In any single-parameter setting, a mechanism is PIC, IR and NB if and only if it can be written as an order-oblivious decision list (as defined in Theorem~\ref{thm:single-parameter}) satisfying the condition in Lemma~\ref{lem:order-oblivious_ic}.
\end{corollary}

\newpage

\appendix

\section{Details Regarding Tiebreaking}\label{app:tiebreaking}

In this appendix, we discuss how tiebreaking affects our results.
Our model treats tiebreaking as part of an agent's preferences, i.e., the internal component of each agent's preferences induces a strict total order over all (outcome, payment) pairs.
Alternatively, one may also consider the model where agents are indifferent between two (outcome, payment) pairs whenever they induce the same quasi-linear utility.
In this section, we discuss the counterparts of our main results in this alternative model.
Technically, we will try to avoid repetition and focus on the essentially different parts of the results and proofs.

\paragraph{An alternative notion of IC.}
With the tiebreaking component $\mathbin{\vartriangleright_i}$ removed, an agent's internal preferences $\mathbin{\succ_i^\pri}$ is solely determined by the primary component of their valuation $v_i: \cO \to \bR$, i.e., $(o_i, q_i) \succ_i^\pri (o_i', q_i') \iff v_i(o_i) - q_i > v_i(o_i') - q_i'$.
Note that this is a weak order with equivalences, which induces an alternative notion of PIC (henceforth \altIC).
Plugging this into the definition of IC, we obtain an alternative version of IC with secondary goals.
The corresponding alternative notion of NB is the following:

\begin{definition}[\altNB]
a mechanism $M = (f, p)$ is \altNB, if for all $i$, $(v_i, v_{-i}) \in \cV$, and $v_i' \in \cV_i$,
\begin{align*}
    \text{if:} \quad & v_i(f_i(v_i, v_{-i})) - p_i(v_i, v_{-i}) = v_i(f_i(v_i', v_{-i})) - p_i(v_i', v_{-i}) \\
    \text{then:} \quad & v_{i'}(f_{i'}(v_i, v_{-i})) - p_{i'}(v_i, v_{-i}) = v_{i'}(f_{i'}(v_i', v_{-i})) - p_{i'}(v_i', v_{-i}),\, \forall i'.
\end{align*}
\end{definition}

\paragraph{Payment characterization.}
We then proceed to the technical results.
We begin with the payment characterization, which now requires a stronger upper semilattice property.

\begin{definition}[Strong Upper Semilattice Property]
    A pair $(\cV_i, \cO_i)$ has the upper semilattice property if for any $o_i \in \cO_i$, $v_i, v_i' \in \cV_i$, there exists $v_i'' \in \cV_i$ such that for all $o_i' \in \cO_i$,
    \[
        v_i''(o_i) - v_i''(o_i') > \max \{v_i(o_i) - v_i(o_i'), v_i'(o_i) - v_i'(o_i')\}.
    \]
    We say $v_i''$ is a common upper bound of $v_i$ and $v_i'$ with respect to $o_i$.
\end{definition}

While the above property is stronger (with $\ge$ in the original definition replaced by $>$), it is still satisfied by (among others) the two settings of particular interest:
\begin{itemize}
    \item Common outcome, complete domain: where each $\cV_i$ is the collection of all functions from $\cO_i$ to $\bR_+$.
    The proof is straightforward.
    \item Single-parameter agents {\em with positive values}: where each $\cO_i = \{0, 1\}$ and each $\cV_i$ is isometric to $\{x \in \bR \mid x > 0\}$.
    Note that here we require each agent's valuation of receiving an item to be strictly positive.
    This ensures that the strong upper semilattice holds even if the outcome $o_i \in \cO_i$ of interest is $o_i = 0$.
    In such cases, we need to choose $v_i''$ to be strictly smaller than both $v_i$ and $v_i'$.
    The proof is again straightforward except for the above change.
\end{itemize}

The payment characterization under \altNB{} then requires the above strong upper semilattice property.
\begin{theorem}[Payment Characterization under \altNB]
    If for all $i \in [n]$, $(\cV_i, \cO_i)$ has the strong upper semilattice property, then for any \altIC{} and \altNB{} mechanism $M = (f, p)$,
\end{theorem}
The proof of the claim is essentially the same as that of Theorem~\ref{thm:payment}, except that here we do not need to deal with ties.

\paragraph{Characterization for the common outcome setting.}
For the common outcome setting, our claim in Theorem~\ref{thm:common} holds for \altNB{} mechanisms without any change.
The only notable difference in the proof is that now we reduce to the corresponding social choice problem with weak preferences --- indeed, the Gibbard-Satterthwaite theorem holds for weak preferences too (see, e.g., \citep{schmeidler1978two}).

\paragraph{Characterization for the single-parameter setting.}
Now we proceed to the single-parameter setting.
In short, under the alternative definitions, our results hold almost everywhere, except on the part of the joint valuation space that lead to ties or involve valuations of $0$.
This is not uncommon in similar characterizations, e.g., the main results in the study on credible mechanisms \citep{akbarpour2020credible} exhibit a similar flavor.
Below we discuss the modifications needed to establish our results almost everywhere under \altNB.

Here we work with strictly positive values, where the payment characterization under \altNB{} applies.
In other words, what we establish here do not apply to the part of the joint valuation space where some agents have valuations of $0$.
We first introduce the following weaker version of Lemma~\ref{lem:consistency}.

\begin{lemma}
\label{lem:weak_consistency}
    Consider any \altIC, IR, and \altNB{} mechanism $M = (f, p)$ in a single-parameter environment.
    Fix two different feasible outcomes $o$ and $o'$, and let $p(o)$, $p(o')$ be the corresponding payment vectors.
    Choose any two valuation vectors where both $v$ and $v'$ {\em strictly} satisfy both $o$ and $o'$ (i.e., $\min\{v_i, v_i'\} > \max\{p_i(o), p_i(o')\}$ for each $i \in [n]$). 
    Then,
    \[
        f(v) = o \implies f(v') \ne o'.
    \]
\end{lemma}

The proof is essentially the same as that of Lemma~\ref{lem:consistency}, except that we replace $\ge$ with $>$ to avoid ties.
From here on, we say a valuation vector $v$ strictly satisfies an outcome $o$ if $v_i > p_i(o)$ for each $i$.
Lemma~\ref{lem:weak_consistency} allows us to establish an almost-everywhere version of Theorem~\ref{thm:single-parameter}.

\begin{theorem}
\label{thm:single-parameter_weak}
    Consider any single-paramter setting.
    Any \altIC, IR, and \altNB{} mechanism $M = (f, p)$ can be written as an order-oblivious decision list with exceptions almost everywhere.
    In particular, the mechanism may deviate from the decision list only when $v_i = p_i(o)$ for some agent $i$ and outcome $o$, or when $v_i = 0$ for some agent $i$.
\end{theorem}

We sketch the key differences in the proof.
Recall that in the proof of Theorem~\ref{thm:single-parameter}, we recursively construct a decision tree roughly in the following way: we start at the root with all feasible outcomes.
We pick the ``maximum'' of these outcomes $v^\cO$, where for any set of outcomes $S \subseteq \cO$, $v^S$ is defined such that $v^S_i = \max_{o \in S} p_i(o)$.
We then let $f(v^\cO)$ be the outcome associated with the root (with $\cO$ being the root's domain).
Then, for any maximal subset $D \subseteq \cO$ such that $v^D$ does not satisfy $f(v^\cO)$, we create a child of the root (with $D$ being its domain), and let the outcome associated with that child be $f(v^D)$.
We repeat this procedure recursively, which produces a decision tree.
Lemma~\ref{lem:consistency} ensures that if a valuation vector $v$ (1) satisfies the outcome $o$ associated with a node and (2) does not satisfy any outcome not in the domain of the node, then we must have $f(v) = o$.
Moreover, by construction, each valuation vector $v$ satisfies the above for at least one node of the tree.
This allows us to turn the tree into an order-oblivious decision list with exceptions.

Now with the mechanism being \altIC{} and \altNB, we can only use the weaker version of Lemma~\ref{lem:consistency}, Lemma~\ref{lem:weak_consistency}.
This requires us to properly replace ``satisfaction'' in the above argument with ``strict satisfaction''.
One concrete way to do so is the following: let $\eps > 0$ be a small parameter to be fixed later.
We extend the definition of $v^S$, and define $v^{S, \eps}$ such that for each agent $i$, $v^{S, \eps}_i = \max_{o \in S} p_i(o) + \eps$.
Then, in the construction of the tree, we replace $v^\cO$ with $v^{\cO, \eps}$, and recursively do the same for $v^D$ at every node.
Now Lemma~\ref{lem:weak_consistency} ensures that if a valuation vector $v$ (1) strictly satisfies the outcome $o$ associated with a node and (2) does not satisfy any outcome not in the domain of the node, then we must have $f(v) = o$.
The difference is that there are valuation vectors that are not covered by any node of the tree --- these are the valuation vectors that fall in the ``gaps'' between $v^D$ and $v^{D, \eps}$ at every node.
As a result, the decision list constructed from the tree does not apply to these valuation vectors.
The volume of valuation vectors that fall into the gap at every node of the tree is $n \cdot \eps$.
Since the number of nodes in the tree is finite (say $K$), the total volume of valuation vectors not covered by the decision list is at most $n \cdot K \cdot \eps$.
Now letting $\eps \to 0$, the volume of valuation vectors not covered vanishes.
In fact, the only valuation vectors left uncovered involve either ties or valuations of $0$.

\section{Restricting Possible Secondary Goals}\label{app:restricted_externalities}
In our definition of agents with secondary goals in Section \ref{sec:prelim_ext}, we allowed the external component of their preference to be any arbitrary ordering over the outcomes and payments of the other agents. In this appendix, we show that this freedom is unnecessary --- all our results continue to hold with only mild constraints on the set of possible secondary goals (namely, as long as each agent can express either altruistic or malicious preferences towards other agents, our characterization continues to hold).

We formalize this as follows. Fixing other agents' valuations to be $v_{-i}$, let $\cP_i^\pub(v_{-i})$ be the set of allowed preferences for agent $i$ over other agents' outcome-payment pairs. For any agent $j \ne i$, we say $\cP_i^\pub(v_{-i})$ {\em covers} $\mathbin{\succ_j^\pri}$, if at least one of the following two conditions hold:
\begin{itemize}
    \item There exists a preference $\mathbin{\succ_i^\pub} \in \cP_i^\pub(v_{-i})$, such that for all $(o, q) \in \cO \times \bR_+^n$, if $(o_j, q_j) \succ_j^\pri (o_j', q_j')$, then $(o_{-i}, q_{-i}) \succ_i^\pub (o_{-i}', q_{-i}')$.
    When $i$ has such an external component of preference, we say {\em $i$ cares positively about $j$}.
    \item There exists a preference $\mathbin{\succ_i^\pub} \in \cP_i^\pub(v_{-i})$, such that for all $(o, q) \in \cO \times \bR_+^n$, if $(o_j, q_j) \succ_j^\pri (o_j', q_j')$, then $(o_{-i}', q_{-i}') \succ_i^\pub (o_{-i}, q_{-i})$.
    When $i$ has such an external component of preference, we say {\em $i$ cares negatively about $j$}.
\end{itemize}

We make the following richness assumption on the possible secondary goals for each agent.

\begin{assumption}[Richness of Secondary Goals]
\label{assumption:richness}
    For any agent $i$ and $v_{-i} \in \cV_{-i}$, $\cP_i^\pub(v_{-i})$ covers all $\{\mathbin{\succ_j^\pri}\}_{j \ne i}$.
\end{assumption}

We also remark that Assumption~\ref{assumption:richness} is in a sense necessary for the concept of IC to distinguish itself from PIC as a stronger notion of robustness --- without the richness assumption, it could be the case that all agents are always indifferent to others' interests, and as a result IC would be precisely equivalent to PIC. With Assumption~\ref{assumption:richness}, all of our proofs (most notably our characterization of IC mechanisms as nonbossy in Theorem \ref{thm:semantic} continue to hold).

\section{Nonbossiness and Obvious Strategyproofness}\label{app:osp}

The notion of nonbossiness is reminiscent of another important notion of robustness against strategic behavior, namely obvious strategyproofness (OSP), proposed by \citet{li2017obviously}.
Roughly speaking, a mechanism (as an extensive-form game) is OSP, if for any agent, the worst (over other agents' actions) thing that may happen under truthful reporting is at least as good as the best thing that may happen under any deviation from reporting truthfully.
A mechanism, as a pair of allocation and payment rules, is OSP implementable, if there is a way to implement this mechanism using an extensive-form game that is OSP.
OSP implementability appears conceptually related to NB, since both notions require that it is ``hard'' for an agent to manipulate the utility of another agent.

We remark that similar as nonbossiness and OSP appear, neither of the two notions is stronger or weaker than the other, even when restricted to PIC and IR mechanisms. 

\begin{proposition}
There exists an NB, PIC and IR mechanism that is not OSP implementable, and an OSP implementable, PIC and IR mechanism that is not NB.
\end{proposition}
\begin{proof}
For the former, consider the mechanism used in the proof of Proposition~\ref{prop:payment_counterexample}, and possible ways to implement it using an extensive-form game.
Since there are only $2$ agents, each with two possible types, the mechanism should interact with each agent exactly once.
Without loss of generality, suppose the mechanism interacts with agent $1$ first (the $2$ agents are symmetric).
Consider the case where agent $1$'s valuation is $x_1 \in \cV_1$.
If agent $1$ reports $x_1$ truthfully, then the worst thing that can happen is agent $2$ reporting $y_1$, in which case agent $1$ gets outcome $o_1$ and pays $1$, leading to a utility of $\eps$.
However, if agent $1$ deviates and reports $x_2$, then the best thing that can happen is agent $2$ reporting $y_2$, in which case agent $1$ gets outcome $o_1$ and pays $0$, leading to a utility of $1 + \eps$.
This means the mechanism is not OSP implementable.

For the latter, it is known that the second-price auction is OSP implementable (using the English auction), but it is not NB, since the payment of the winner depends on the highest other bid. 
\end{proof}

Despite this non-comparability, many of the NB mechanisms we study in this paper are indeed also OSP implementable. Specifically, sequential posted-price mechanisms for a single item are OSP in their default implementation, as are PIC decision list mechanisms with empty exception lists.

\section{Single-Parameter Characterizations: Overview of Proof Techniques}
\label{app:single-parameter_overview}

Before diving into the proofs, we first present an example illustrating what these decision lists look like.

\begin{example}
    Consider an environment with $n = 3$ agents and $\cO = \{0, 1\}^3$.
    Consider the following mechanism: allocate to all $3$ agents if they all have value at least $1$; otherwise, allocate to the ``clockwise first'' agent with value at least $1$ if there exists one; otherwise, do not allocate at all.
    The full mechanism is described by the table below (the part corresponding to allocating to the ``clockwise first'' agent is highlighted in red).
    \[
        \begin{array}{ccc}
            (v_1, v_2, v_3) & (o_1, o_2, o_3) & (p_1, p_2, p_3) \\\\
            (\ge 1, \ge 1, \ge 1) & (1, 1, 1) & (1, 1, 1) \\
            \color{red} (\ge 1, \ge 1, < 1) & \color{red} (1, 0, 0) & (1, 0, 0) \\
            \color{red} (\ge 1, < 1, \ge 1) & \color{red} (0, 0, 1) & (0, 0, 1) \\
            \color{red} (< 1, \ge 1, \ge 1) & \color{red} (0, 1, 0) & (0, 1, 0) \\
            (\ge 1, < 1, < 1) & (1, 0, 0) & (1, 0, 0) \\
            (< 1, \ge 1, < 1) & (0, 1, 0) & (0, 1, 0) \\
            (< 1, < 1, \ge 1) & (0, 0, 1) & (0, 0, 1) \\
            (< 1, < 1, < 1) & (0, 0, 0) & (0, 0, 0)
        \end{array}.
    \]
    One may check that the above mechanism is PIC, IR and NB.
    The mechanism corresponds to the following decision list (which is not unique):
    \begin{itemize}
        \item $o^{(1)} = (1, 1, 1)$, $p(o^{(1)}) = (1, 1, 1)$, $E(o^{(1)}) = \emptyset$.
        \item $o^{(2)} = (1, 0, 0)$, $p(o^{(2)}) = (1, 0, 0)$, $E(o^{(2)}) = \{o^{(3)} = (0, 0, 1)\}$.
        \item $o^{(3)} = (0, 0, 1)$, $p(o^{(3)}) = (0, 0, 1)$, $E(o^{(3)}) = \{o^{(4)} = (0, 1, 0)\}$.
        \item $o^{(4)} = (0, 1, 0)$, $p(o^{(4)}) = (0, 1, 0)$, $E(o^{(4)}) = \{o^{(2)} = (1, 0, 0)\}$.
        \item $o^{(5)} = (0, 0, 0)$, $p(o^{(5)}) = (0, 0, 0)$, $E(o^{(5)}) = \emptyset$.
    \end{itemize}
    In particular, exception lists are necessary in this example: there is no way to order $o^{(2)}$, $o^{(3)}$ and $o^{(4)}$ to always choose the clockwise first agent with value at least $1$ without exception lists.
    Also note that the above exception lists are order-specific, i.e., reordering the outcomes may result in a decision list implementing a different mechanism.
    However, one can make the exception lists order-oblivious by adding $o^{(1)}$ to the exception lists of all other outcomes, and $o^{(2)}$, $o^{(3)}$ and $o^{(4)}$ to the exception list of $o^{(5)}$.
\end{example}

We now describe some of the ideas in the proofs of Theorems~\ref{thm:single-item}~and~\ref{thm:single-parameter}.
We first sketch how Theorem~\ref{thm:single-parameter} implies Theorem~\ref{thm:single-item}.

\begin{proof}[Proof Sketch of Theorem~\ref{thm:single-item}]
By Theorem 5, we know any nonbossy mechanism for the single-item setting can be written as a decision list mechanism. To show it is a posted-price mechanism, we need to show that this decision list mechanism can be written in such a way that all the exception lists are empty. 

To do this, we make use of the fact that nonbossiness imposes some additional strong constraints on the structure of the extension lists. For example, in this single-item setting we can show there cannot exist two outcomes $o$ and $o'$ (where agents $i$ and $i'$ get allocated the items) such that $o \in E(o')$ and $o' \in E(o)$. To see why, assume that $o$ comes before $o'$ in the decision list. Then if outcome $o$ occurs, $i'$ can cause a different outcome to occur by increasing their value and causing $o'$ to be feasible (all the while, $o'$ will never be selected since $o$ is feasible and $o \in E(o')$). By applying similar logic, we can construct a total ordering over all the outcomes (where $o$ dominates $o'$ if whenever $o$ is feasible, $o'$ cannot be picked) and show that there exists a posted-price mechanism which presents the items in this order. See the full proof in Appendix~\ref{sec:omitted} for details.
\end{proof}

We now discuss our proof of Theorem~\ref{thm:single-parameter}.
Our proof crucially relies on the following lemma, which shows that if two valuation vectors $v$ and $v'$ both satisfy (in the sense of Theorem~\ref{thm:single-parameter}) outcomes $o$ and $o'$, then it is not possible that under valuation $v$ the outcome is $o$ and under $v'$ the outcome is $o'$. 
\begin{lemma}
\label{lem:consistency}
    Consider any PIC, IR, and NB mechanism $M = (f, p)$ in a single-parameter environment.
    Fix two different feasible outcomes $o$ and $o'$, and let $p(o)$, $p(o')$ be the corresponding payment vectors.
    Choose any two valuation vectors $v$ and $v'$ where both $v$ and $v'$ satisfy both $o$ and $o'$ (i.e., $\min\{v_i, v'_i\} \ge \max\{p_i(o), p_i(o')\}$ for each $i \in [n]$).\footnote{
        Note that $\min$ is well defined on $\eR$.
    }
    Then,
    \[
        f(v) = o \implies f(v') \ne o'.
    \]
\end{lemma}

Intuitively, Lemma \ref{lem:consistency} follows from yet another consequence of nonbossiness and PIC; we show that if we increase the valuation from either $v$ or $v'$ to a common maximum $v''$, the outcome of the mechanism cannot change without violating one of these two properties.

Lemma \ref{lem:consistency} is useful for the following reason: instead of having to worry about the behavior of our mechanism $M = (f, p)$ for all valuation vectors $v$, we can use Lemma \ref{lem:consistency} to construct a finite number of different ``classes'' of valuation vector, where $M$ acts identically on every vector in a class. 

For example, consider the valuation vector $v^{\cO}$ defined via $v^{\cO}_i = (\max_{o \in \cO} p_i(o))_+$; in words, $v^{\cO}_i$ is the minimal valuation vector which satisfies every outcome in $\cO$. By Lemma \ref{lem:consistency}, if $v$ is a valuation vector which satisfies the outcome $f(v^{\cO})$, then $v$ must also lead to the outcome $f(v^{\cO})$. We can therefore group the set of all valuations $v$ which satisfy $f(v^{\cO})$ into a single ``class''.

In the proof of Theorem \ref{thm:single-item} we repeat this logic for other subsets of outcomes to first construct a structure we call a ``decision tree''. In a decision tree, every node is labeled by a subset $S \subseteq \cO$ of outcomes and corresponds to a class of valuations ``equivalent to'' (in the sense that we can apply Lemma \ref{lem:consistency} to argue they have the same outcome) the minimal valuation $v^{S}$ that satisfies every outcome in $S$. The root of this tree is labeled by $\cO$ and contains the class of valuations described above. The children of a node labeled $S$ contain the maximal subsets $S'$ of $S$ where $v^{S'}$ does not belong to the class of valuations of $v^S$. 

By constructing this tree in this way, we can prove that each valuation $v$ is associated to some node, and in fact it is associated to the node with the maximal label $S$ such that $v$ does not satisfy any outcome outside of $S$. This gives us a ``decision-list-esque'' mechanism, where for each $v$ we iterate through the nodes of our tree until we find a node $S$ where $v$ both satisfies $f(v^S)$ and does not satisfy any outcome $o \not\in S$. This process may consider the same outcomes more than once, but with a little bit more care we can convert this to a single decision list, and even an order-oblivious one (see the full proof in Appendix~\ref{sec:omitted} for details).

\section{Omitted Proofs}\label{sec:omitted}

\subsection{Proof of Theorem~\ref{thm:semantic}}

\begin{proof}[Proof of Theorem~\ref{thm:semantic}]
    First, fix an environment $(\cV, \cO)$ and a mechanism $M = (f, p)$.
    Assuming $M$ is PIC and NB, we show it is IC.
    To this end, consider any agent $i$, valuation vector $v \in \cV$, external component of $i$'s preference $\mathbin{\succ_i^\pub} \in \cP_i^\pub(v_{-i})$, and deviation $v_i' \in \cV_i$.
    Also let $\succ_i^\pri$ be the internal component of $i$'s preference induced by $v_i$.
    Since $M$ is PIC, we have
    \[
        (f_i(v_i, v_{-i}), p_i(v_i, v_{-i})) \succeq_i^\pri (f_i(v_i', v_{-i}), p_i(v_i', v_{-i})).
    \]
    So, in order to show
    \[
        (f(v_i, v_{-i}), p(v_i, v_{-i})) \succeq_i (f(v_i', v_{-i}), p(v_i', v_{-i})),
    \]
    we only need to argue that whenever
    \[
        (f_i(v_i, v_{-i}), p_i(v_i, v_{-i})) = (f_i(v_i', v_{-i}), p_i(v_i', v_{-i})),
    \]
    we have
    \[
        (f_{-i}(v_i, v_{-i}), p_{-i}(v_i, v_{-i})) \succeq_i^\pub (f_{-i}(v_i', v_{-i}), p_{-i}(v_i', v_{-i})).
    \]
    This follows directly from the fact that $M$ is NB, since when
    \[
        (f_i(v_i, v_{-i}), p_i(v_i, v_{-i})) = (f_i(v_i', v_{-i}), p_i(v_i', v_{-i})),
    \]
    NB requires that
    \[
        (f(v_i, v_{-i}), p(v_i, v_{-i})) = (f(v_i', v_{-i}), p(v_i', v_{-i})),
    \]
    and therefore
    \[
        (f_{-i}(v_i, v_{-i}), p_{-i}(v_i, v_{-i})) =_i^\pub (f_{-i}(v_i', v_{-i}), p_{-i}(v_i', v_{-i})).
    \]

    Now consider the other direction.
    Assuming $M$ is IC, we show it is PIC and NB.
    First observe that any IC mechanism $M$ is PIC.
    Below we show $M$ is also NB.
    Fix any agent $i$, valuation vector $v$, and deviation $v_i'$.
    Suppose towards a contradiction that there exists $j \ne i$, such that
    \[
        (f_j(v_i, v_{-i}), p_j(v_i, v_{-i})) \ne (f_j(v_i', v_{-i}), p_j(v_i', v_{-i})).
    \]
    Moreover, without loss of generality,
    \[
        (f_j(v_i', v_{-i}), p_j(v_i', v_{-i})) \succ_j^\pri (f_j(v_i, v_{-i}), p_j(v_i, v_{-i})).
    \]
    Now, let us choose an external preference component $\mathbin{\succ_i^\pub}$ for agent $i$ such that for all $(o, q) \in \cO \times \bR_+^n$, if $(o_j, q_j) \succ_j^\pri (o_j', q_j')$, then $(o_{-i}, q_{-i}) \succ_i^\pub (o_{-i}', q_{-i}')$. Note that even if we are in the restricted model of secondary goals of Appendix \ref{app:restricted_externalities},  Assumption~\ref{assumption:richness} guarantees such an ordering exists in $\cP_i^\pub(v_{-i})$. When $i$ has this external preference component, we have
    \[
        (f(v_i', v_{-i}), p(v_i', v_{-i})) \succ_i (f(v_i, v_{-i}), p(v_i, v_{-i})).
    \]
    However, RwSG requires that
    \[
        (f(v_i, v_{-i}), p(v_i, v_{-i})) \succeq_i (f(v_i', v_{-i}), p(v_i', v_{-i})),
    \]
    a contradiction.
    This concludes the proof.
\end{proof}

\subsection{Proof of Proposition~\ref{prop:payment_applicability}}

\begin{proof}[Proof of Proposition~\ref{prop:payment_applicability}]
    We argue these cases one by one.
    In each of these settings, fixing $i$, $v_i$, $v_i'$ and $o_i$, we explicitly construct a common upper bound of $v_i$ and $v_i'$ with respect to $o_i$, as described in Definition~\ref{def:upper_semilattice}.
    \begin{itemize}
        \item
        {\bf Common outcome, complete domain.}
        Let $v_i''$ be such that $v_i''(o_i) = \max\{v_i(o_i), v_i'(o_i)\}$ and $v_i''(o_i') = 0$ for $o_i' \in \cO_i \setminus \{o_i\}$.
        \item 
        {\bf Single-parameter agents.} 
        If $o_i = 0$, then let $v_i''(1) = 0$; otherwise, let $v_i''(1) = \max \{v_i(1), v_i'(1)\}$.
        \item
        {\bf Combinatorial auctions, single-minded agents.}
        Let $v_i$ be described by $S_i$ and $x_i$, and $v_i'$ by $S_i'$ and $x_i'$.
        Let $v_i''$ be induced by $S_i''$ and $x_i''$, where $S_i'' = o_i$, and $x_i'' = \max\{x_i, x_i'\}$.
        Consider $v_i$, and observe that for any $o_i' \subseteq M$, if $o_i \not\subseteq o_i'$, then
        \[
            v_i''(o_i) - v_i''(o_i') = v_i''(o_i) \ge v_i(o_i) \ge v_i(o_i) - v_i(o_i').
        \]
        If $o_i \subseteq o_i'$, then because $v_i$ is monotone,
        \[
            v_i''(o_i) - v_i''(o_i') = 0 \ge v_i(o_i) - v_i(o_i').
        \]
        Similarly, for $v_i'$, we always have
        \[
            v_i''(o_i) - v_i''(o_i') \ge v_i'(o_i) - v_i'(o_i').
        \]
        \item
        {\bf Combinatorial auctions, valuations ``between'' additive and XOS.}
        Recall that an XOS valuation is the pointwise maximum of a number of additive valuations, each of which is a clause.
        Let $c_i$ be the clause of $v_i$ such that $v_i(o_i) = c_i(o_i)$, and $c_i'$ be the clause of $v_i'$ such that $v_i'(o_i) = c_i'(o_i)$.
        Note that $c_i$ (resp.\ $c_i'$) also satisfies for any set of items $T \subseteq M$, $c_i(T) \le v_i(T)$ (resp.\ $c_i'(T) \le v_i'(T)$).
        Let $v_i''$ be an additive valuation such that for each item $j$,
        \[
            v_i''(\{j\}) = \begin{cases}
            \max\{c_i(\{j\}), c_i'(\{j\})\}, & \text{if } j \in o_i \\
            0, & \text{otherwise}
            \end{cases}.
        \]
        Clearly $v_i'' \in \cV_i$, because it is additive.
        For any $o_i' \subseteq M$,
        \begin{align*}
            v_i''(o_i) - v_i''(o_i') & = \sum_{j \in o_i \setminus o_i'} v_i''(\{j\}) - \sum_{j \in o_i' \setminus o_i} v_i''(\{j\}) \\
            & = \sum_{j \in o_i \setminus o_i'} v_i''(\{j\}) \ge \sum_{j \in o_i \setminus o_i'} c_i(\{j\}) \tag{construction of $v_i''$} \\
            & \ge \sum_{j \in o_i \setminus o_i'} c_i(\{j\}) - \sum_{j \in o_i' \setminus o_i} c_i(\{j\}) \\
            & = c_i(o_i) - c_i(o_i') = v_i(o_i) - c_i(o_i') \tag{property of $c_i$} \\
            & \ge v_i(o_i) - v_i(o_i'). \tag{property of $c_i$}
        \end{align*}
        Similarly,
        \[
            v_i''(o_i) - v_i''(o_i') \ge v_i'(o_i) - v_i'(o_i').
        \]
        \item
        {\bf Combinatorial auctions, valuations ``beyond'' subadditive.}
        Let $C = \max\{v_i(M), v_i'(M)\}$, and $v_i''$ be such that $v_i''(\emptyset) = 0$, $v_i''(T) = 2C$ if $o_i \subseteq T$, and $v_i''(T) = C$ otherwise.
        It is easy to check $v_i''$ is subadditive, and so $v_i'' \in \cV_i$.
        For any $o_i'$, if $o_i \subseteq o_i'$, because $v_i$ is monotone,
        \[
            v_i''(o_i) - v_i''(o_i') = 0 \ge v_i(o_i) - v_i(o_i').
        \]
        If $o_i \not\subseteq o_i'$, then
        \[
            v_i''(o_i) - v_i''(o_i') = C \ge v_i(M) \ge v_i(o_i) - v_i(o_i').
        \]
        Similarly,
        \[
            v_i''(o_i) - v_i''(o_i') \ge v_i'(o_i) - v_i'(o_i').
        \]
        \item
        {\bf Metric space.}
        Let $v_i$ be induced by $x_i$ and $v_i'$ by $x_i'$.
        Let $v_i''$ be induced by $o_i \in \cO \subseteq X_i$.
        Then for any $o_i' \in \cO \subseteq X_i$,
        \[
            v_i''(o_i) - v_i''(o_i') = d_i(o_i, o_i') \ge d_i(x_i, o_i') - d_i(x_i, o_i) = v_i(o_i) - v_i(o_i').
        \]
        And similarly
        \[
            v_i''(o_i) - v_i''(o_i') \ge v_i'(o_i) - v_i'(o_i'). \qedhere
        \]
    \end{itemize}
\end{proof}

\subsection{Proof of Theorem~\ref{thm:common}}

\begin{proof}[Proof of Theorem~\ref{thm:common}]
    Fix a PIC and NB mechanism $M = (f, p)$.
    By Proposition~\ref{prop:payment_applicability}, the payment characterization (Theorem~\ref{thm:payment}) applies in this setting.
    To this end, for each $o_c \in \cO_1 = \dots = \cO_n$, let $p_i(o_c)$ be agent $i$'s payment when the common outcome is $o_c$.
    For each valuation vector $v$, consider the following transformed valuation vector $\hat{v}$, where for each agent $i$ and outcome $o_c$,
    \[
        \hat{v}_i(o_c) = v_i(o_c) - p_i(o_c).
    \]
    Observe that for any $i$ and $v_i \in \cV_i$, for any $o_c, o_c' \in \cO_1 = \dots = \cO_n$,
    \[
        (o_c, p_i(o_c)) \succ_i^\pri (o_c', p_i(o_c')) \iff \hat{v}_i(o_c) > \hat{v}_i(o_c') \vee (\hat{v}_i(o_c) = \hat{v}_i(o_c') \wedge o_c \vartriangleright_i o_c'),
    \]
    where $\vartriangleright_i$ is the tiebreaking component of $v_i$.
    Now consider the behavior of $\hat{f}$, the transformed version of $f$, over all possible transformed valuations $\hat\cV = \{\hat{v} \mid v \in \cV\}$, where we define $\hat{f}(\hat{v}) = f(v)$.
    Then, the above observation gives a way to derive $\mathbin{\succ_i^\pri}$ from $\hat{v}$ directly, and so we can define PIC for $\hat{f}$ based on $\succ_i^\pri$.
    Note that $\hat{f}$ corresponds bijectively to $f$, and $(\hat{f}, p^\mathrm{zero})$ is PIC iff $(f, p)$ is PIC, where $p^\mathrm{zero}$ assigns $0$ payment to all agents in all cases.
    In other words, one can view $\hat{f}$ as a PIC mechanism over $\hat\cV$ without payments.
    Suppose $|\cO_1| = |\cO_2| = \dots = |\cO_n| = k$.
    Observe that such transformed valuations $\hat\cV$ can induce all possible strict total orders over outcomes: for example, in order for agent $i$'s internal component of preference to be
    \[
        o_c^{(1)} \succ_i^\pri o_c^{(2)} \succ_i^\pri \dots \succ_i^\pri o_c^{(k)},
    \]
    we only need
    \[
        \hat{v}_i(o_c^{(1)}) > \hat{v}_i(o_c^{(2)}) > \dots > \hat{v}_i(o_c^{(k)}).
    \]
    Such a transformed valuation can be obtained from, for example, $v_i \in \cV_i$ such that
    \[
        v_i = (k - i) + p_i(o_c^{(i)}).
    \]
    Now since $\hat{f}$ is PIC, it must induce a social choice rule over all strict total orders over the outcomes (i.e., it cannot assign different outcomes to two transformed valuation vectors inducing the same orders for all agents simultaneously), and moreover, this social choice rule cannot be manipulable.
    By the Gibbard-Satterthwaite Theorem \citep{gibbard1973manipulation,satterthwaite1975strategy}, this social choice rule must either be dictatorship or use only two outcomes.
    The same characterization immediately applies to $\hat{f}$, and by the payment characterization extends to $(f, p)$, which concludes the proof.
\end{proof}

\subsection{Proof of Theorem~\ref{thm:single-item}}

\begin{proof}[Proof of Theorem~\ref{thm:single-item}]
    \sloppy{
    For any PIC, IR and NB mechanism $M = (f, p)$, let $o^{(1)}, \dots, o^{(n + 1)}$ be a decision list implementing $M$ (with payment vectors $p(o^{(1)}), \dots, p(o^{(n + 1)})$ and exception lists $E(o^{(1)}), \dots, E(o^{(n + 1)})$).
    Such a decision list exists by Theorem~\ref{thm:single-parameter} (we do not even require this list to be order-oblivious).
    Without loss of generality, assume each agent can sometimes get the item --- otherwise we can remove the agents who never get the item, since they can never affect the outcome and payments because of NB.
    Note that the empty outcome $(0, \dots, 0)$ can never appear in an exception list of another outcome $o^{(j)}$, because in that case $o^{(j)}$ can never happen (since the payment vector corresponding to the empty outcome must be $(0, \dots, 0)$, and the empty outcome is always satisfied).}
    Moreover, if the empty outcome is ordered before another outcome $o^{(j)}$, $o^{(j)}$ must be in the exception list of the empty outcome, because, again, otherwise $o^{(j)}$ can never happen.
    Given these facts, we can without loss of generality move the empty outcome to the end of the list, and assume $o^{(n + 1)} = (0, \dots, 0)$.
    Moreover, we renumber the agents so that $o^{(i)}_i = 1$ for all $i \in [n]$.
    That is, in the $i$-th outcome, agent $i$ receives the item.
    For brevity, let $q_i = p_i(o^{(i)})$ for each $i \in [n]$.

    Now observe that for any two agents $i$ and $i'$, it cannot be the case that $o^{(i)} \in E(o^{(i')})$ and $o^{(i')} \in E(o^{(i)})$ simultaneously.
    This is because if that happens, then for the valuation vector $v$ where $v_i = q_i$, $v_{i'} = q_{i'}$, and $v_{i''} = 0$ for $i'' \notin \{i, i'\}$, we must have $f(v) \notin \{o^{(i)}, o^{(i')}\}$.
    However, for the valuation vector $v'$ where $v'_i = q_i$ and $v'_{i''} = 0$ for $i'' \ne i$, we must have $f(v') = o^{(i)}$.\footnote{One subtlety is it is possible that there exists some $i''$ where $q_{i''} = 0$. This does not affect the argument much, because such an $i''$ cannot appear in the exception list associated with any other nonempty outcome, and $i''$ must be agent $n$. Then one can assume without loss of generality that $i' \ne n$, and the rest of the argument still works. Similarly one can fix the argument for the nonexistence of $3$-cycles below.}
    In other words, $i'$ can affect $i$'s outcome and payment without affecting $i'$'s own outcome or payment, which violates NB.
    
    We define the following ``domination'' binary relation over agents: we say $i$ dominates $i'$ if $o^{(i)} \in E(o^{(i')})$, or $i < i'$ and $o^{(i')} \notin E(o^{(i)})$. Intuitively, if $i$ dominates $i'$, then if $o^{(i)}$ is feasible (i.e., $v_i \geq q_i$) then $o^{(i')}$ cannot be chosen. Observe that any two different agents $i$ and $i'$ are comparable under the domination relation, i.e., one of the two must dominate the other, and it is never the case that the two agents dominate each other simultaneously (because it cannot be the case that $o^{(i)} \in E(o^{(i')})$ and $o^{(i')} \in E(o^{(i)})$ simultaneously, as argued above).

    Below we argue by contradiction that the domination relation actually is a total order over agents.
    In particular, it does not have $3$-cycles.
    Suppose otherwise, i.e., there exist $i$, $i'$ and $i''$, such that $i$ dominates $i'$, $i'$ dominates $i''$, and $i''$ dominates $i$.
    Then for the valuation vector $v$ where $v_i = q_i$, $v_{i'} = q_{i'}$, $v_{i''} = q_{i''}$ and $v_{i'''} = 0$ for all $i''' \notin \{i, i', i''\}$,
    we must have $f(v) \notin \{o^{(i)}, o^{(i')}, o^{(i'')}\}$.
    However, for $v'$ where $v'_i = q_i$, $v'_{i'} = q_{i'}$ and $v'_{i'''} = 0$ for all $i''' \notin \{i, i'\}$, we must have $f(v') = o^{(i)}$.
    This means $i''$ can affect $i$'s outcome and payment without affecting $i''$'s own outcome or payment, violating NB.

    Now we argue that the total order given by the domination relation is precisely one possible order that can be used in a sequential posted-price mechanism to implement $M$.
    In particular, the following sequential posted-price mechanism is equivalent to the decision list implementing $M$: repeatedly visit the agent who dominates all other agents among unvisited ones.
    Upon visiting each agent $i$, offer the price $q_i$.
    If $v_i \ge q_i$, then let $f(v) = o^{(i)}$ and $p(v) = p(o^{(i)})$.
    Otherwise, continue the process.
    Terminate the process and let $(f(v), p(v)) = ((0, \dots, 0), (0, \dots, 0))$ if there is no unvisited agent left.
    One can check this sequential posted-price mechanism does precisely what the decision list does.
\end{proof}

\subsection{Proof of Lemma~\ref{lem:consistency}}

\begin{proof}[Proof of Lemma~\ref{lem:consistency}]
    Suppose to the contrary that both $f(v) = o$ and $f(v') = o'$.
    Choose $v''$ such that $v''_i = \min\{v_i, v'_i\}$ for each $i \in [n]$.
    The plan is to argue that $f(v'') = f(v)$ and $f(v'') = f(v')$ simultaneously, a contradiction.
    We only need to show $f(v'') = f(v) = o$, for which it suffices to inductively argue that for each $i \in [n]$,
    \[
        f(v) = f(v_1, \dots, v_{i - 1}, v''_i, v''_{i + 1}, \dots, v''_n) = f(v_1, \dots, v_{i - 1}, v_i, v''_{i + 1}, \dots, v''_n).
    \]
    Fix some $i \in [n]$.
    By the induction hypothesis,
    \[
        f(v_1, \dots, v_{i - 1}, v_i, v''_{i + 1}, \dots, v''_n) = f(v) = o.
    \]
    Recall that $v_i \ge v''_i$ by the choice of $v''$.
    Consider the following $2$ cases:
    \begin{itemize}
        \item $o_i = 0$.
        In this case, because $M$ is PIC, $f$ must be monotone (by Myerson's characterization of single-parameter PIC mechanisms \citep{myerson1981optimal}), so decreasing agent $i$'s value from $v_i$ to $v''_i$ cannot change the personal outcome (or payment) of agent $i$, i.e.,
        \[
            f_i(v_1, \dots, v_{i - 1}, v''_i, v''_{i + 1}, \dots, v''_n) = f_i(v_1, \dots, v_{i - 1}, v_i, v''_{i + 1}, \dots, v''_n).
        \]
        Now since $M$ is NB, the above implies
        \[
            f(v_1, \dots, v_{i - 1}, v''_i, v''_{i + 1}, \dots, v''_n) = f(v_1, \dots, v_{i - 1}, v_i, v''_{i + 1}, \dots, v''_n).
        \]
        \item $o_i = 1$.
        In this case, because $M$ is PIC, fixing other agents' valuations, the menu agent $i$ faces is a take-it-or-leave-it offer with price $p_i(o) \le v''_i$ (again by Myerson's characterization).
        So, decreasing $i$'s value from $v_i$ to $v''_i$ results in the same outcome and payment, because both $v_i$ and $v''_i$ are no smaller than $p_i(o)$.
        Again, since $M$ is NB, this implies
        \[
            f(v_1, \dots, v_{i - 1}, v''_i, v''_{i + 1}, \dots, v''_n) = f(v_1, \dots, v_{i - 1}, v_i, v''_{i + 1}, \dots, v''_n).
        \]
    \end{itemize}
    This finishes the induction step, and shows that $f(v) = f(v'')$.
    But then similarly one can show $f(v') = f(v'')$, which leads to a contradiction because $f(v) = o \ne o' = f(v')$.
    This concludes the proof.
\end{proof}

\subsection{Proof of Theorem~\ref{thm:single-parameter}}

\begin{proof}[Proof of Theorem~\ref{thm:single-parameter}]
    Given a PIC, IR, and NB mechanism $M = (f, p)$, we explicitly construct a decision list implementing this mechanism.
    We first construct a ``decision tree'' (which is quite different from a normal decision tree), which we will later turn into a decision list.
    The decision tree we construct is of the following form: each node is associated with a set of outcomes, which we call the domain of the node, and a single outcome in the domain.
    In particular, the domain of the root is $\cO$.
    Each node may have any number of children, with the constraint that the domain of any child of a node is a strict subset of the domain of its parent (so the decision tree must be finite).
    The way we choose an outcome using such a decision tree is the following: starting from the root, at every node, we check whether the valuation vector $v$ satisfies the outcome associated with the node. 
    If yes, then that outcome is the one we choose.
    Otherwise, we move on to an arbitrary child of the current node satisfying the following condition: $v$ does not satisfy any outcome that is not in the domain of that child (we will see from the construction that such a child always exists), and repeat the above procedure.
    The intuition behind the procedure will become clear momentarily.

    Now we construct the tree.
    For any set of outcomes $S \subseteq \cO$, let $v^S$ be such that for each $i \in [n]$,
    \[
        v^S_i = (\max\{p_i(o) \mid o \in S\})_+.
    \]
    We start from the root, whose domain is $\cO$, and let $f(v^\cO)$ be the outcome associated with the root.
    This means the tree should choose $f(v^\cO)$ whenever it is satisfied.
    This is in fact consistent with $f$, because for any other outcome $o \in \cO$, $v^\cO$ satisfies both $f(v^\cO)$ and $o$.
    By Lemma~\ref{lem:consistency}, whenever $f(v^\cO)$ is satisfied by a valuation vector $v$, $f(v) \ne o$.
    This holds for any $o \in \cO \setminus \{f(v^\cO)\}$, so it must be the case that $f(v) = f(v^\cO)$ whenever $v$ satisfies $f(v^\cO)$.

    Now we describe the way we construct the children of a node (e.g., the root), which, applied recursively, gives the entire tree.
    Let $D$ be the domain of the parent node, and $o$ be the outcome associated with the parent node.
    Each child corresponds to a maximal subset $D'$ of $D$ such that $v^{D'}$ does not satisfy $o$.
    Note that there may be many such maximal subsets, and we construct a child for each of these maximal subsets.
    For the child whose domain is $D'$, again, we let the outcome associated with that child be $f(v^{D'})$.
    Observe that when the tree chooses $f(v^{D'})$ at this node, this choice must also be consistent with $f$.
    This is because if we reach this node, then it must be the case that the valuation vector $v$ can only satisfy outcomes in $D'$.
    And again by Lemma~\ref{lem:consistency}, if $v$ satisfies $f(v^{D'})$, then it must be the case that $f(v) = f(v^{D'})$.

    Now consider the correctness of the decision tree constructed, i.e., whether it implements $f$.
    Above we have argued that whenever the decision tree chooses an outcome, that outcome must be consistent with the choice by $f$.
    So we only need to show the decision tree in fact always outputs an outcome.
    Consider any valuation vector $v$.
    First observe that $v$ satisfies at least $1$ outcome, i.e., the empty outcome $(0, \dots, 0)$.
    Below we argue that at any node with domain $D$ and associated outcome $o$, there is always a child that we can go to, if $v$ does not satisfy $o$.
    In fact, let $S \ne \emptyset$ be the set of outcomes that $v$ satisfies.
    Because of the way we go down the tree, we must have $S \subseteq D$.
    Also, we have $v_i \ge v^S_i$ for each $i \in [n]$, simply by the definition of $S$.
    If $v$ does not satisfy $o$, then clearly $v^S$ does not satisfy $o$ either.
    So $S$ must be a subset of the domain of at least $1$ child of the current node, because we choose the domains of the children in a maximal way.
    That child is a node we can go to.
    Now because the tree is finite, we cannot keep going down forever, so we must be able to choose an outcome somewhere in the tree.
    Together with all the arguments above, this shows that the tree we construct in fact implements $f$ (and $M$ because $p(v) = p(f(v))$).

    Now we turn the decision tree into a decision list.
    First observe that when evaluating the decision tree, the order in which we visit the nodes does not matter (although the specific way of evaluating the tree discussed above is instrumental in proving the correctness of the construction).
    More specifically, let $\{(D^{(j)}, o^{(j)})\}_{j \in [K]}$ be the collection of domain-outcome pairs of all nodes of the decision tree, where $K$ is the number of nodes.
    To evaluate the decision tree, it suffices to check $(D^{(j)}, o^{(j)})$ one by one in any order, and choose the first $o^{(j)}$ where $v$ satisfies $o^{(j)}$ and does not satisfy any outcome not in $D^{(j)}$.
    In fact, since the decision tree implements $f$, exactly one outcome from these pairs will satisfy the condition of being chosen.
    Now we merge pairs with the same outcome: for any outcome $o \in \cO$, let
    \[
        D(o) = \bigcup_{j: o^{(j)} = o} D^{(j)}.
    \]
    Consider the new collection of pairs $\{(D(o), o)\}_{o \in \cO}$, and observe that it is equivalent to the old list: when the old list chooses some outcome, the new list chooses the same outcome.
    One subtlety is that the new list may choose multiple outcomes simultaneously, which would make it ill-defined.
    However, this cannot happen because whenever the new list chooses an outcome, it must be the outcome assigned by $f$.
    To see why this is the case, consider any valuation vector $v$.
    Let $(D(o), o)$ be a pair in the new list such that $v$ satisfies $o$ and does not satisfy any outcome not in $D(o)$.
    By Lemma~\ref{lem:consistency}, because $v$ satisfies $o$, for any $o' \in D(o) \setminus \{o\}$ that is also satisfied by $v$, $f(v) \ne o'$.
    This is because by the construction of $D(o)$, there exists some $v'$ which satisfies both $o$ and $o'$ such that $f(v') = o$ (consider any node in the decision tree whose domain contains $o'$ and whose associated outcome is $o$).
    On the other hand, since $M$ is IR, $f(v)$ must be satisfied by $v$, so the only option left is $f(v) = o$.
    Now let the exception list of each outcome $o$ be $E(o) = \cO \setminus D(o)$.
    This gives an order-oblivious decision list as stated in Theorem~\ref{thm:single-parameter}.
\end{proof}

\subsection{Proof of Lemma~\ref{lem:order-oblivious_ic}}

\begin{proof}[Proof of Lemma~\ref{lem:order-oblivious_ic}]
    First note that if an order-oblivious decision list with exceptions induces a well-defined mechanism, it must be IR and NB.
    The IR part is straightforward: in order for an outcome to be chosen, it must be satisfied by the valuation vector, which means all agents receiving an item prefer paying their respective prices for an item to receiving nothing.
    Below we argue it is also NB.
    Consider any agent $i$ and fix all other agents' valuations $v_{-i}$.
    Imagine the process where agent $i$'s valuation rises from $0_-$ to $\infty$.
    Since the decision list is order-oblivious, whenever the outcome chosen by the decision list changes, it must be the case that the new outcome becomes satisfied, while some outcome in the exception list of the previously chosen outcome also becomes satisfied.\footnote{
        This is to be contrasted with the following case that can only happen when the decision list is order-specific: the previously chosen outcome encounters an exception, and the outcome chosen becomes something that comes later in the list.
        In the case of an order-oblivious decision list, this would make the mechanism ill-defined.
    }
    Then by definition, the new price thta agent $i$ pays must be strictly larger than the previous price (this also covers the case where $i$ receives no item in the previously chosen outcome).
    In other words, whenever the outcome vector changes because of $i$'s change of behavior, $i$'s payment must also change, which implies NB.

    Now we show that the induced mechanism is PIC if and only if the condition in the claim is satisfied.
    First we assume the order-oblivious decision list with exceptions induces a PIC mechanism, and argue it satisfies the condition.
    Suppose towards a contradiction that for some agent $i$ and vector $u \in \eR^n$, there exist distinct outcomes $o^{(1)}, o^{(2)} \in O^u$ such that for each $j \in \{1, 2\}$,
    \[
        o_i^{(j)} = 1 \quad \text{and} \quad \{o \in E(o^{(j)}) \cap O^u \mid p_i(o) \le p_i(o^{(j)})\} = \emptyset.
    \]
    Without loss of generality, suppose $p_i(o^{(1)}) \ge p_i(o^{(2)})$.
    We consider the behavior of the mechanism at $(v_i, v_{-i})$ and $(v_i', v_{-i})$, where:
    \begin{itemize}
        \item $v_i = (p_i(o^{(1)}))_+$,
        \item $v_i' = (p_i(o^{(2)}))_+$, and
        \item $v_{-i}$ be such that $v_{i'} = (\max_{o \in O^u} p_{i'}(o))_+$ for each $i' \ne i$.
    \end{itemize}
    Observe that for any $x \le u_i$ cannot satisfy any outcome in $\cO \setminus O^u$, because $(x, v_{-i})$ must be coordinate-wise smaller than or equal to $u$.
    Also, by definition $v_i' \le v_i \le u_i$, so neither $(v_i, v_{-i})$ nor $(v_i', v_{-i})$ can satisfy any outcome in $\cO \setminus O^u$.
    Consider the behavior of the mechanism at $(v_i, v_{-i})$.
    Observe that any $o$ satisfied by $(v_i, v_{-i})$ must satisfy $o \in O^u$ and $p_i(o) \le p_i(o^{(1)})$.
    Then by the choice of $o^{(1)}$, such an $o$ must not be a member of $E(o)$.
    Also, $(v_i, v_{-i})$ in fact satisfies $o^{(1)}$, which means the desicion list must choose $o^{(1)}$ as the outcome at $(v_i, v_{-i})$.
    In particular, agent $i$ receives an item and pays $p_i(o^{(1)})$ at $(v_i, v_{-i})$.
    By similar logic, the decision list must choose $o^{(2)}$ as the outcome at $(v_i', v_{-i})$, and agent $i$ receives an item and pays $p_i(o^{(2)})$.
    Observe that $p_i(o^{(1)}) \ne p_i(o^{(2)})$, because otherwise both $o^{(1)}$ and $o^{(2)}$ should be chosen at $(v_i, v_{-i}) = (v_i', v_{-i})$, making the mechanism ill-defined.
    But then agent $i$ would deviate from $(v_i, v_{-i})$ to $(v_i', v_{-i})$ because $i$ pays strictly less when reporting $v_i'$, violating PIC.

    Now consider the other direction, i.e., given the condition we show the mechanism is PIC.
    Fix any $i$, $v_i, v_i' \in \eR$ and $v_{-i} \in \eR^{-i}$.
    We only need to argue that $i$ would not deviate from $(v_i, v_{-i})$ to $(v_i', v_{-i})$.
    Let $u \in \eR^n$ be such that $u_{-i} = v_{-i}$ and $u_i = \max\{v_i, v_i'\}$.
    Consider the following cases:
    \begin{itemize}
        \item For each $o \in O^u$, either $o_i = 0$ or $\{o' \in E(o) \cap O^u \mid p_i(o') \le p_i(o)\} \ne \emptyset$.
        In particular, this means for all $x \le u_i$, the outcome $o$ chosen at $(x, v_{-i})$ must satisfy $o_i = 0$, which also applies to $v_i$ and $v_i'$.
        So $i$ would not deviate from $v_i$ to $v_i'$.
        \item There is precisely one $o^* \in O^u$ such that $o_i^* = 1$ and $\{o \in E(o^*) \cap O^u \mid p_i(o) \le p_i(o^*)\} \ne \emptyset$.
        Then, it is impossible for $i$ to pay different prices for an item at $v_i$ and $v_i'$, because if $i$ receives an item, the outcome chosen must be $o^*$, which uniquely determines the price.
        The only case left for $i$ to be motivated to deviate from $v_i$ to $v_i'$ is when $i$ receives no item at $v_i$ and pays less than $v_i$ for an item at $v_i'$.
        In order for this to happen, the outcome chosen at $(v_i', v_{-i})$ must be $o^*$, and we must also have $p_i(o^*) < \min\{v_i, v_i'\}$.
        This means both $(v_i, v_{-i})$ and $(v_i', v_{-i})$ satisfy $o^*$, so it must be the case that $(v_i', v_{-i})$ does not satisfy any outcome in $O^u \cap E(o^*)$ while $(v_i, v_{-i})$ does satisfy some.
        As a result, it must be the case that $v_i' < v_i$.
        However, this would make the mechanism ill-defined at $(v_i, v_{-i})$.
        This is because $o^*$ is chosen at $(v_i', v_{-i})$, so every $o' \in O^u$ where $o'_i = 0$ must have in its exception list some outcome $o'' \in E(o') \cap O^u$ satisfied by $(v_i', v_{-i})$.
        The latter part of the statement is also true at $(v_i, v_{-i})$ because $v_i > v_i'$.
        As a result, no outcome $o' \in O^u$ where $o'_i = 0$ can be chosen at $(v_i, v_{-i})$, which contradicts the fact that $i$ receives no item at $(v_i, v_{-i})$.
        In other words, $i$ would in no case deviate from $(v_i, v_{-i})$ to $(v_i', v_{-i})$.\qedhere
    \end{itemize}
\end{proof}

\end{document}